\def\be{\begin{equation}}
	\def\ee{\end{equation}}
\def\ba{\begin{array}}
	\def\ea{\end{array}}
\DeclareMathOperator{\sgn}{sgn}
\newtheorem{remark}{Remark}
\newtheorem{definition}{Definition}
\newtheorem{theorem}{Theorem}
\newtheorem{lemma}{Lemma}
\newtheorem{assumption}{Assumption}
\newtheorem{corollary}{Corollary}
\DeclareMathAlphabet{\mathpzc}{OT1}{pzc}{m}{it}
\DeclareMathAlphabet{\mathcal}{OMS}{cmsy}{m}{n}
\newcommand{\R}{\mathbb{R}}
\DeclareMathAlphabet\mathbfcal{OMS}{cmsy}{b}{n}
\definecolor{darkgreen}{rgb}{0.0, 0.5, 0.0}
\newcommand{\ubar}[1]{\underaccent{\bar}{#1}}
\definecolor{myLightGray}{RGB}{191,191,191}
\definecolor{myGray}{RGB}{160,160,160}
\definecolor{myDarkGray}{RGB}{144,144,144}
\definecolor{myDarkRed}{RGB}{167,114,115}
\definecolor{mylightRed}{RGB}{255,114,118}
\definecolor{myRed}{RGB}{255,58,70}
\definecolor{ForestGreen}{RGB}{188,255.0,77.5}
\definecolor{myGreen}{RGB}{0,255,71}
\begin{document}

\title{Stochastic Mean Field Game for Strategic Bidding of Consumers in Congested Distribution Networks}

\author{Amirreza~Silani, Simon~H.~Tindemans 
	\thanks{A.~Silani and  S.H.~Tindemans are with the 
		Department of Electrical Sustainable Energy, Delft University of Technology, The Netherlands 
		({\tt\small \{a.silani, s.h.tindemans\}@tudelft.nl})}
	\thanks{This research was supported by the GO-e project, which received funding from the MOOI subsidy programme by the Netherlands Ministry of Economic Affairs and Climate Policy and the Ministry of the Interior and Kingdom Relations, executed by the Netherlands Enterprise Agency.}
}

\markboth{Journal of \LaTeX\ Class Files,~Vol.~14, No.~8, August~2021}%
{Shell \MakeLowercase{\textit{et al.}}: A Sample Article Using IEEEtran.cls for IEEE Journals}

\maketitle

\begin{abstract}
The rapid increase of photovoltaic cells, batteries, and Electric Vehicles (EVs) in electric grids can result in congested distribution networks. An alternative to enhancing network capacity is a redispatch market, allowing Distribution System Operators (DSOs) to alleviate congested networks by asking energy consumers to change their consumption schedules. 
 However, energy consumers can anticipate the redispatch market outcomes and  strategically adjust their bids in the day-ahead market. This behaviour, known as increase-decrease gaming, can result in the exacerbation of congestion and enable energy consumers to gain windfall profits from the DSO. 
 In this paper, we consider a two-stage problem consisting of the day-ahead market (first stage) and redispatch market (second stage). Then, we model the increase-decrease game for large populations of energy consumers in power networks using a stochastic mean field game 
 approach. The agents (energy consumers) maximize their individual welfare in the day-ahead market with anticipation of the redispatch market. We show that all the agent strategies are ordered along their utilities and there exists a unique Nash equilibrium for this game. 
\end{abstract}

\begin{IEEEkeywords}
Increase-decrease game, mean field game, flexibility market, congestion management.
\end{IEEEkeywords}

\section{Introduction}\label{sec:1}
There is a growing number of distributed energy resources in the electrical distribution grid, such as photovoltaic cells, batteries, and Electric Vehicles (EVs). The energy system  decarbonization  is essential for these loads; however, 
they introduce additional challenges for power networks, known as network congestion. 
There are various methods to alleviate the congestion issues including Local Flexibility Markets (LFMs) \cite{ref1,ref1.5,ref2}, Direct Load Control (DLC) \cite{ref2.0,ref2.1,ref2.2}, new forms of distribution tariffs \cite{ref2.3,ref2.4}, and infrastructure upgrade of distribution grid \cite{ref2.5}. Among them, 
we concentrate on the LFM proposals designed to deal with congestion management. 
Following the closure of the Day-Ahead (DA) market, LFM proposals usually consider energy consumption schedules.  In the LFM, when the forecast consumption schedules of flexible and inflexible loads lead to congestion issues, the Distribution System Operator (DSO) requests that energy consumers modify their consumption schedules, \textit{i.e.}, opening a local Redispatch (RD) market. 
To address the congestion problem, the DSO provides monetary compensation to the energy consumers  who decrease their consumption schedules in the redispatch market \cite{ref1,ref1.5,ref2}. 
However, energy consumers can anticipate the redispatch market outcomes and strategically adjust their bids in the day-ahead market, which results in the increase-decrease gaming \cite{lion}.  The increase-decrease game can lead to  the aggravation of congestion and enable energy consumers to gain
windfall profits from the DSO.  

Recently, there has been a growing research focus on  modeling and analysis of the increase-decrease game in analogous electrical energy markets (see for example \cite{lion,sarfati,sun,ntnu,Liu,Kohansal}). 
In \cite{lion}, an inconsistent power market design which combines a zonal spot market with a locational redispatch market is considered and it is shown that if a market-based solution is used, producers can anticipate the redispatch market and bid strategically in the spot markets. Indeed, the increase-decrease game is possible even when there exists no market power. A two-stage game is presented in \cite{sarfati} to analyze imperfect competition of producers in zonal power markets with a real-time and a day-ahead market, where strategic producers consider both markets when they bid in each market. In \cite{sun}, a profit decomposition method evaluating the effects of bidding strategies on the payoffs is proposed and the maximal markup is utilized as an index for extended Locational Marginal Price (eLMP) scheme to quantify market power.  
In \cite{ntnu}, a bi-level two-stage framework of the electricity market with a day-ahead market and a real-time redispatch market is proposed, where upper level is the strategic producer and lower level is the Transmission System Operator (TSO) follower.  
It is investigated in \cite{Kohansal} how a strategic market agent can maximize its profit when placing convergence bids and how such strategically submitted convergence bids can influence the price gaps between the day-ahead market and real-time market. 
Two auction methods consisting of marginal pricing and pay-as-bid pricing are proposed in \cite{Liu} to investigate generator's strategic bidding behaviors and it is shown that non-marginal generators will bid their true costs whilst the marginal one will bid above its costs under marginal pricing method and lower-cost generators are likely to bid their anticipated clearing price under a pay-as-bid pricing method.

Although, these papers investigate the increase-decrease game in different electrical energy markets, to the best of our knowledge, there exists no work in the literature which provide a suitable model and Nash equilibrium analysis for the increase-decrease game applied to distribution network congestion. Thus, in this paper, we study the increase-decrease game in distribution network congestion.

In competitive markets, the agents aim at maximizing their individual welfare with anticipation of the redispatch market (increase-decrease game) which may lead to so-called market failures where  congestion is not efficiently managed at the lowest cost. In LFM solutions for congestion management, there exist three types of market failures: price control (market power), fake schedules, and true modified schedules \cite{roman}. In the price control, few agents have control over a large amount of flexible loads and hold a monopoly over the congestion. In the fake schedules, the agents do not intend to fulfil their schedules since under the LFM mechanism, it will be paid to reduce their schedules anyways. In the true modified schedules, the agents bid the schedules that they really intends to fulfil; however, when they anticipate high energy demand in real-time, they increase their consumption to be paid to reduce it. Hence, the increase-decrease game can lead to market failures in congested distribution
networks and this vital problem has not been modeled and analyzed in the works in the literature (see for instance \cite{lion,sarfati,sun,ntnu,Liu,Kohansal}). In this paper, we mainly focus on the last problem and model and
analyze the increase-decrease game in congested distribution
networks. 

In LFM proposals for distribution network congestion management, typically the market has 
the multi-level (leader-follower)  structure of reverse Stackelberg game, where the followers move first and the leader's objective function includes a mapping from the decision space of the followers to the decision space of the leader \cite{Stackelberg1,Stackelberg2,Stackelberg3}. 
Therefore, we use the  
reverse Stackelberg game for modeling. 
Furthermore, since we do not want to consider the market power in the increase-decrease game, the population of energy consumers should be considered large, then the mean field game theory is used  for modeling. Indeed, 
mean field game is a methodology to investigate the multi-agent coordination problems which is classified into deterministic and stochastic approaches \cite{meanfield1,meanfield2}. In mean field game, each agent is affected by the statistical distribution of the population, and when the number of agents increases sufficiently, the contribution of each individual agent to the population distribution disappears \cite{sergio2,sergio,ma}. 
Thus, we exploit the stochastic mean field game theory and  multi-level structure of reverse Stackelberg game for modeling and analyzing the increase-decrease game applied to distribution network congestion for large populations of energy consumers. 

In this paper, we consider a two-stage problem consisting of the day-ahead market (first stage) and redispatch market (second stage). Then, we model the increase-decrease game for large populations of energy consumers in power networks using a stochastic mean field game 
 and reverse Stackelberg game approach. The energy consumers maximize their individual welfare in the day-ahead market with anticipation of the redispatch market. Then, we show that all the agent strategies are ordered along their utilities and there exists a unique Nash equilibrium for this game. The contributions of the paper can then be summarized as follows: (i) a two-stage problem consisting of the day-ahead market and redispatch market is considered; (ii)~the increase-decrease game for large populations of energy consumers in power networks is modeled using a stochastic mean field game 
approach; (iii) it is shown that all the agent strategies are ordered along their utilities; (iv)~the existence and the uniqueness of the Nash equilibrium for this game are theoretically proved. 

\subsection{Notation}
The set of positive (non-negative) real numbers is denoted by $\R_{>0}$ ($\R_{\geq0}$).
The $i$-th element of vector $x$ is denoted by $x_i$. 
Let  $[s]^{+} := s$ if $s> 0$, $0$ otherwise. 
$\mathbb{E}_{\nu}[\cdot]$ denotes the  mathematical expectation with respect to the random variable $\nu$. Let $f^{\prime}(x)$ be the derivative of $f(x)$. The deterministic variables are denoted by the small letters (for instance $\pi$) while the stochastic variables are denoted by the capital letters (for instance $\Pi$).

\section{Model description}\label{sec:2}
In this section, we describe the role of energy consumers and DSO in the day-ahead and redispatch market for congestion management. In market based congestion management, we have a two-stage problem consisting of the day-ahead market (first stage) and redispatch market (second stage). Fig.~\ref{fig1} illustrates this market model.

\begin{figure}[t]
	\begin{tikzpicture}[%
		every node/.style={
			font=\scriptsize,},]
	   
		\draw[->] (-0.87,0) -- (6.7,0);

		\foreach \x in {0,1.2,2.4,3.6,4.8,5.97}{
			\draw (\x cm,3pt) -- (\x cm,-3pt);
		}
\node[draw,
    circle,
    minimum size=1.05cm,
    fill=Goldenrod
] (DSO) at (-1.6,0.7){DSO};

\node[draw,
    circle,
    minimum size=1.05cm,
    fill=SeaGreen] (Market) at (-1.6,-0.7)  {Market};

\node[draw,
    circle,
    minimum size=1.05cm,
    fill=red!40] (Agents) at (-1.6,-2.1)  {Agents};

 \node[anchor=north] at (2.45,1.6) {\textbf{Stage 2}};
  
\node [draw,
    fill=Goldenrod,
    minimum width=1.75cm,
    minimum height=0.9cm,]  (Realize) at (2.45,0.75) {$\begin{array}{cc} d\leftarrow d_0 + D \\ \text{Redispatch?} \end{array}$};

\node [draw,
    fill=SeaGreen,
    minimum width=1.75cm,
    minimum height=0.9cm,]  (DAclearing) at (1.17,-0.7) {$\begin{array}{cc} 
 \text{DA clearing} \\ e^\mathrm{d}_{{i}}, \pi^\mathrm{d} \end{array}$};

 \node [draw,
    fill=SeaGreen,
    minimum width=1.75cm,
    minimum height=0.9cm,]  (RDclearing) at (4.8,-0.7) {$\begin{array}{cc} 
 \text{RD clearing} \\ e^\mathrm{r}_{{i}}, \pi^\mathrm{r} \end{array}$};

 \node[anchor=north] at (0,-1.25) {\textbf{Stage 1}};

 \node [draw,
    fill=red!40,
    minimum width=1.75cm,
    minimum height=0.9cm,]  (DAbids) at (0,-2.1) {$\begin{array}{cc} 
 \text{DA bids} \\ \hat{e}^\mathrm{d}_{{i}}, 
 \hat{\pi}^\mathrm{d}_{{i}} \end{array}$};

  \node [draw,
    fill=red!40,
    minimum width=1.75cm,
    minimum height=0.9cm,]  (RDbids) at (3.6,-2.1) {RD bids};

   \node [draw,
    fill=red!40,
    minimum width=1.4cm,
    minimum height=0.9cm,]  (Consume) at (6,-2.1) {Consume};

  \node[anchor=north] at (6.5,-0.1) {time};

     \draw[-stealth] (DAbids.east)  -- ++ (0.301,0)  node[near start,above]{} -- (DAclearing.south);

     \draw[-stealth] (DAclearing.north)  -- ++ (0,0.985)  node[near start,above]{} -- (Realize.west);

     \draw[-stealth] (RDbids.east)  -- ++ (0.322,0)  node[near start,above]{} -- (RDclearing.south);

     \draw[-stealth] (RDclearing.east)  -- ++ (0.306,0)  node[near start,above]{} -- (Consume.north);
     
  \draw[-stealth] (Realize.south) -- ++  (0,-0.99) node[midway,right]{Yes} 

  (Realize.south) -- ++ (0,-2.39) --  (RDbids.west);

     \draw[-stealth] (Realize.east)  -- ++ (2.56,0)  node[near start,above]{No} -- (Consume.north);
	
		\fill[myGreen] (-0.87,-2.85) rectangle (2.4,-3);
		\fill[gray] (2.4,-2.85) rectangle (5.34,-3);
	
		\fill[myRed] (5.34,-2.85) rectangle (6.7,-3);

		\draw[decorate,decoration={brace,amplitude=5pt}] (2.4,-3.2) -- (-0.87,-3.2)
		node[anchor=south,midway,below=4pt] {\textbf{DA}};
		\draw[decorate,decoration={brace,amplitude=5pt}] (5.4,-3.2) -- (2.4,-3.2)
		node[anchor=south,midway,below=4pt] {\textbf{RD}};
		\draw[decorate,decoration={brace,amplitude=5pt}] (6.7,-3.2) -- (5.4,-3.2) 
		node[anchor=south,midway,below=4pt] {\textbf{Real-time}};

	\end{tikzpicture}
	\centering
	\caption{Market model.}
	\label{fig1}
\end{figure}
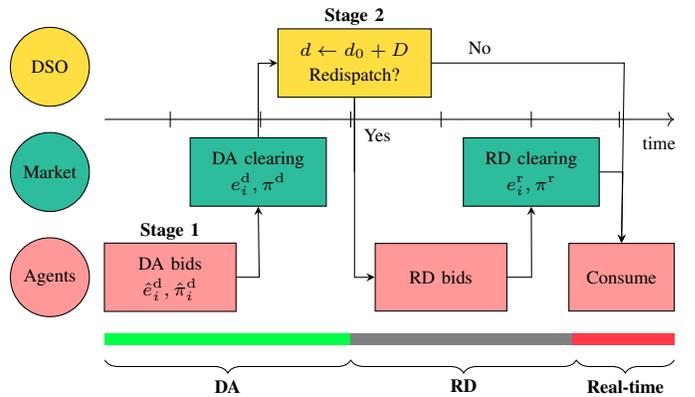

We consider the problem of energy consumption for  energy consumers (agents) with flexible loads. Indeed, each energy consumer has a utility (\textit{i.e.}, the intrinsic value of energy) to consume energy for its flexible loads.  We consider the set of agents $\mathcal{N}:=\{1,2,\dots,\infty\}$. 
For notational simplicity, we consider the problem for one time slot, but the same model is trivially applied to multiple (non-coupled) time slots. 
For each $i\in\mathcal{N}$, let $\hat{e}^\mathrm{d}_{{i}}$ and $\hat{\pi}^\mathrm{d}_{{i}}$ be the bid consumption schedule and bid price, $e^\mathrm{d}_{{i}}$ and $\pi^\mathrm{d}$ be the purchased consumption schedule and  clearing price (\textit{i.e.}, determined by pay-as-cleared rule) in the day-ahead market, $e^\mathrm{r}_{{i}}$ and $\pi^\mathrm{r}$ be the consumption reduction and price in the redispatch market (when congestion occurs) for its flexible loads, respectively, $u_{i}\in\R_{>0}$ be the utility  
and $e^{\mathrm{max}}_{i}$ be the maximum energy consumption. Furthermore, 
let $D$ be the stochastic variable of  forecast inflexible load demand which is realized in (near) real-time,  $d_0$ be the anticipated forecast inflexible load demand (known to all) and $c$ be the network capacity.

Each agent $i\in\mathcal{N}$ has the knowledge of $u_i$ and $e^{\mathrm{max}}_{i}$.
Then, each agent $i$ bids the consumption schedule  $\hat{e}^\mathrm{d}_{{i}}$, with the price $\hat{\pi}^\mathrm{d}_{{i}}$ in the day-ahead market (Stage~1). Then, the day-ahead market price $\pi^\mathrm{d}$ is cleared and agent $i$ purchases the consumption schedule  $e^\mathrm{d}_{{i}}=\hat{e}^\mathrm{d}_{{i}}$ if $\hat{\pi}^\mathrm{d}_{{i}}>\pi^\mathrm{d}$,  while it does not if $\hat{\pi}^\mathrm{d}_{{i}}<\pi^\mathrm{d}$, \textit{i.e.}, $e^\mathrm{d}_{{i}}=0$.  
Next, the DSO collects the schedules 
in the day-ahead market and based on the realization of $D$ if a congestion problem is detected, DSO asks the agents to reduce their consumption in the redispatch market (Stage~2).  
The agents who reduce their consumption in the redispatch market will be paid by the DSO. We note that the agents can either not trade on the day-ahead market, or trade.  
In the latter case, the agents can only trade on the day-ahead market or trade on both day-ahead and redispatch market. 

This model has the multi-level (leader-follower)  structure of reverse Stackelberg game, where the followers move first and the leader's objective function includes a mapping from the decision space of the followers to the decision space of the leader \cite{Stackelberg1}.  More precisely, the followers (energy consumers) first bid their consumption schedules and prices,  
then the day-ahead market price is cleared  and some followers  
trade and some do not trade on the day-ahead market. Then, the leader (DSO) collects the consumption schedules in the day-ahead market and opens the redispatch market based on the realization of $D$ if a congestion problem is detected. Next, the followers bid their consumption reductions, then the redispatch market price is cleared (this stage is trivial and considered implicit).  
Finally, some followers  
 trade and some do not trade on the redispatch market and the leader will pay to the followers who trade. Notably, the redispatch market bid, clearing and real-time consumption stages are implicit decisions or obvious problems. Indeed, two main stages are the day-ahead market bid (Stage~1) and opening the redispatch market or not (Stage~2).  We also note that the leader's costs for redispatching depend on the strategies of followers in both day-ahead and redispatch market. 

In this two-stage problem, the agents can anticipate the redispatch market and bid strategically in the day-ahead market to maximize their individual welfare which leads to the
increase-decrease gaming \cite{lion}. Indeed, the agents can anticipate when the congestion may occur in the network. 
Then, they modify their consumption schedules in the day-ahead market such that they will be paid by the DSO to reduce it in the redispatch market. 

Moreover, since we do not aim at taking into account the market power in the increase-decrease game applied to distribution network congestion, the population of energy consumers should be considered large. Therefore, we exploit the mean field game theory for modeling and analyzing the increase-decrease game, where each agent is affected by the statistical distribution of the population, and when the number of agents increases sufficiently, the contribution of each individual agent to the population distribution disappears \cite{sergio2,sergio,ma}.

\section{{Mean field game appraoch}}\label{sec:3}
In this section, we model the increase-decrease game for large populations of energy consumers using a stochastic mean field game approach. 

\subsection{Individual agent strategy} 
Each agent $i\in\mathcal{N}$ maximizes its individual  
welfare and can anticipate when the congestion may occur in the network. It modifies its schedule in the day-ahead market to maximize its individual welfare. The 
welfare of agent~$i$ is its utility minus day-ahead cost, plus anticipated
redispatch revenue 
 given by 
\begin{equation}\label{eq3.5}
W_i = \text{Utility} - \text{DA Cost} + \text{RD Revenue}. 
\end{equation}

In the following lemma, we obtain the optimal second stage strategy provided the purchased consumption schedule in the day-ahead market and price in the redispatch market. 
\begin{lemma}\textbf{{(Optimal second stage strategy).}}\label{lemma1}
Let the purchased consumption schedule $e_i^\mathrm{d}$ in the day-ahead and price $\pi^\mathrm{r}$ in the redispatch market be given. Then, the optimal second stage strategy is given by \footnote{The optimal second stage strategy of agent $i\in\mathcal{N}$ is indeterminate if $\pi^\mathrm{r}=u_i$. Agent $i$ is indifferent in this case.}: 
\begin{equation}\label{eq2}
e_i^\mathrm{r \ast} =      \begin{cases}
	e_{i}^\mathrm{d}, &\quad\text{if}~\pi^\mathrm{r}>u_i\\
	0, &\quad\text{if}~\pi^\mathrm{r}<u_i. 
\end{cases}
\end{equation}
\end{lemma}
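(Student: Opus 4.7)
The plan is to reduce the second-stage decision to maximizing a one-dimensional linear function of $e_i^\mathrm{r}$ over the feasible interval $[0,e_i^\mathrm{d}]$, and then to read off the sign of the slope.

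First, I would instantiate the welfare decomposition \eqref{eq3.5} for a single agent at the second stage. The physical consumption that actually delivers utility is $e_i^\mathrm{d}-e_i^\mathrm{r}$, since whatever is bought day-ahead but sold back in the redispatch market is not consumed. Thus the welfare, conditional on the day-ahead outcome $(e_i^\mathrm{d},\pi^\mathrm{d})$ and the redispatch price $\pi^\mathrm{r}$, takes the form
\begin{equation*}
W_i \;=\; u_i\bigl(e_i^\mathrm{d}-e_i^\mathrm{r}\bigr)\;-\;\pi^\mathrm{d}\,e_i^\mathrm{d}\;+\;\pi^\mathrm{r}\,e_i^\mathrm{r}.
\end{equation*}
Feasibility in the redispatch market is $0\le e_i^\mathrm{r}\le e_i^\mathrm{d}$, because an agent can only offer to reduce what it has already purchased.

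Next, I would observe that at Stage~2 the quantities $e_i^\mathrm{d}$ and $\pi^\mathrm{d}$ are already determined, so the term $u_i e_i^\mathrm{d}-\pi^\mathrm{d} e_i^\mathrm{d}$ is a sunk constant. Dropping it, the agent's problem reduces to
\begin{equation*}
\max_{\,0\le e_i^\mathrm{r}\le e_i^\mathrm{d}}\;\bigl(\pi^\mathrm{r}-u_i\bigr)\,e_i^\mathrm{r},
\end{equation*}
which is a linear program in a single scalar variable. The solution is immediate from the sign of the coefficient $\pi^\mathrm{r}-u_i$: if $\pi^\mathrm{r}>u_i$ the objective is strictly increasing in $e_i^\mathrm{r}$, so the upper endpoint $e_i^{\mathrm{r}\ast}=e_i^\mathrm{d}$ is optimal; if $\pi^\mathrm{r}<u_i$ the objective is strictly decreasing, so the lower endpoint $e_i^{\mathrm{r}\ast}=0$ is optimal. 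The knife-edge case $\pi^\mathrm{r}=u_i$ makes every feasible $e_i^\mathrm{r}\in[0,e_i^\mathrm{d}]$ optimal, which is exactly the indifference flagged in the footnote.

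There is essentially no hard step here; the only care required is to be explicit that the day-ahead terms drop out at the second stage and that the feasible set is $[0,e_i^\mathrm{d}]$ rather than $[0,e_i^{\mathrm{max}}]$, since the agent cannot sell back capacity it never bought. The whole argument is a one-paragraph verification, and the result \eqref{eq2} then follows directly.
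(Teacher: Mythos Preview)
Your proposal is correct and follows essentially the same approach as the paper: both reduce the second-stage problem to the linear program $\max_{0\le e_i^\mathrm{r}\le e_i^\mathrm{d}}(\pi^\mathrm{r}-u_i)e_i^\mathrm{r}$ and read off the solution from the sign of the coefficient. Your version is slightly more explicit in deriving this objective from the welfare decomposition \eqref{eq3.5} and in noting that the day-ahead terms are sunk, but the argument is otherwise identical.
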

\begin{proof} 
 The agent $i$ solves the following optimization problem in the redispatch market:
\begin{equation}\label{eq3}
	\begin{split}
		&\max_{e^\mathrm{r}_{i}}  \big[(\pi^\mathrm{r}-u_i)e^\mathrm{r}_{i}\big] \\
		&~~\text{s.t.}~ 0\leq e^\mathrm{r}_{i}\leq e^\mathrm{d}_{i},
	\end{split}
\end{equation}
The optimization problem \eqref{eq3} is a constrainted linear program; thus, we can conclude that if $\pi^\mathrm{r}>u_i$ then $e_i^\mathrm{r \ast}=e_i^\mathrm{d}$ and if $\pi^\mathrm{r}<u_i$ then $e_i^\mathrm{r \ast}=0$. 
\end{proof}

\begin{lemma}\textbf{{(Expected welfare).}}\label{lemma2}
Let assume that the agents have rational behavior in the redispatch market and $e_i^\mathrm{d}$ be given. Then, the expected welfare of agent $i\in\mathcal{N}$ can be rewritten as
\begin{align}\nonumber
		\mathbb{E}_{\Pi^\mathrm{r}}[W_i\vert \pi^\mathrm{d}, \Pi^\mathrm{r}, e_i^\mathrm{d}]  = 
		&e_i^\mathrm{d}\big(u_i -\pi^\mathrm{d}+\mathbb{E}_{\Pi^\mathrm{r}}\big[[\Pi^\mathrm{r}-u_i]^{+}\big]\big)\\ \label{eq4}
		:= & e_i^\mathrm{d} g(u_i).  
\end{align}
\end{lemma}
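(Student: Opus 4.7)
The plan is to expand the three welfare components in \eqref{eq3.5} explicitly in terms of the primitive quantities $u_i$, $e_i^{\mathrm{d}}$, $\pi^{\mathrm{d}}$, $\pi^{\mathrm{r}}$ and the second-stage decision $e_i^{\mathrm{r}}$, then substitute the optimal second-stage strategy from Lemma~\ref{lemma1}, and finally take the expectation with respect to $\Pi^{\mathrm{r}}$.

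First I would write out the three terms of $W_i$. The agent pays $\pi^{\mathrm{d}} e_i^{\mathrm{d}}$ in the day-ahead stage; it receives $\pi^{\mathrm{r}} e_i^{\mathrm{r}}$ from the DSO in the redispatch stage; and since it effectively consumes $e_i^{\mathrm{d}}-e_i^{\mathrm{r}}$ after curtailment, its utility contribution is $u_i(e_i^{\mathrm{d}}-e_i^{\mathrm{r}})$. Collecting these gives
\begin{equation*}
W_i = u_i\bigl(e_i^{\mathrm{d}}-e_i^{\mathrm{r}}\bigr) - \pi^{\mathrm{d}} e_i^{\mathrm{d}} + \pi^{\mathrm{r}} e_i^{\mathrm{r}} = e_i^{\mathrm{d}}\bigl(u_i-\pi^{\mathrm{d}}\bigr) + (\pi^{\mathrm{r}}-u_i)\,e_i^{\mathrm{r}}.
\end{equation*}

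Second, I would invoke rationality in the redispatch stage, i.e., plug in $e_i^{\mathrm{r}\ast}$ from Lemma~\ref{lemma1}. If $\pi^{\mathrm{r}}>u_i$ then $(\pi^{\mathrm{r}}-u_i)e_i^{\mathrm{r}\ast}=(\pi^{\mathrm{r}}-u_i)e_i^{\mathrm{d}}>0$; if $\pi^{\mathrm{r}}<u_i$ then $e_i^{\mathrm{r}\ast}=0$ and the term vanishes; the measure-zero tie $\pi^{\mathrm{r}}=u_i$ contributes $0$ regardless of the choice. In every case the redispatch term equals $[\Pi^{\mathrm{r}}-u_i]^{+}e_i^{\mathrm{d}}$, so $W_i = e_i^{\mathrm{d}}\bigl(u_i-\pi^{\mathrm{d}}+[\Pi^{\mathrm{r}}-u_i]^{+}\bigr)$.

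Third, I would take $\mathbb{E}_{\Pi^{\mathrm{r}}}[\,\cdot\,]$ conditional on the deterministic quantities $\pi^{\mathrm{d}}$ and $e_i^{\mathrm{d}}$ (with $u_i$ known to the agent). By linearity of expectation, only the last summand is affected, yielding the stated identity \eqref{eq4} with $g(u_i):=u_i-\pi^{\mathrm{d}}+\mathbb{E}_{\Pi^{\mathrm{r}}}[[\Pi^{\mathrm{r}}-u_i]^{+}]$.

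There is essentially no analytical obstacle here: the argument is a bookkeeping substitution followed by linearity of expectation. The only subtlety worth flagging in the write-up is the indifference case $\pi^{\mathrm{r}}=u_i$, where the second-stage action is not unique but the welfare contribution is zero, so the $[\cdot]^{+}$ representation is valid almost surely regardless of the tie-breaking rule.
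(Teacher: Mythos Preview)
Your proof is correct and follows essentially the same route as the paper: write $W_i$ as the sum of utility, day-ahead cost and redispatch revenue, substitute the optimal second-stage action $e_i^{\mathrm{r}\ast}$ from Lemma~\ref{lemma1}, and take the expectation. Your version is somewhat more explicit (in particular you spell out why $(\Pi^{\mathrm{r}}-u_i)e_i^{\mathrm{r}\ast}=[\Pi^{\mathrm{r}}-u_i]^{+}e_i^{\mathrm{d}}$ and handle the tie case), but the argument is the same.
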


\begin{proof}
According to Lemma~\ref{lemma1} (and by assuming that the agents have rational behavior in the redispatch market), the expected  welfare of agent $i\in\mathcal{N}$  can be given by
\begin{align}\nonumber
	\mathbb{E}_{\Pi^\mathrm{r}}[W_i\vert \pi^\mathrm{d}, \Pi^\mathrm{r}, e_i^\mathrm{d}] = & \mathbb{E}_{\Pi^\mathrm{r}}[w_i\vert \pi^\mathrm{d}, \Pi^\mathrm{r}, e_i^\mathrm{d}, e_i^\mathrm{r \ast}] \\ \label{eq5}
	=& e_i^\mathrm{d} (u_i -\pi^\mathrm{d})+\mathbb{E}_{\Pi^\mathrm{r}}[e_i^\mathrm{r \ast}(\Pi^\mathrm{r}-u_i)],
\end{align}
where $e_i^\mathrm{r \ast}$ is given in \eqref{eq2}. Therefore, 
 by replacing  \eqref{eq2} in \eqref{eq5},  we can rewrite \eqref{eq5} as \eqref{eq4}. 
\end{proof}

\begin{lemma}\textbf{{(Optimal schedule in day-ahead market).}}\label{lemma3}
	The mapping $u\mapsto g(u)$ given in \eqref{eq4} is monotone increasing. Moreover, for each agent $i\in\mathcal{N}$, the optimal day-ahead schedule $e_i^\mathrm{d \ast}$ is determined based on $\sgn(g(u_i))$ as follows:
	\begin{equation}\label{eq8}
		e_i^\mathrm{d \ast} =  \begin{cases}
			e_i^\mathrm{\mathrm{max}}, &\quad\text{if}~g(u_i)>0\\
			0, &\quad\text{if}~g(u_i)<0 \\
               \text{indifferent,} &\quad\text{otherwise.}
                                    \\
		\end{cases}
	\end{equation}
\end{lemma}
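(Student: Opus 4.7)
The plan is to handle the two assertions in sequence: first the monotonicity of $g$, then the case analysis for $e_i^{\mathrm{d}\ast}$.

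For monotonicity, I would observe that $g(u) = u - \pi^{\mathrm{d}} + \mathbb{E}_{\Pi^{\mathrm{r}}}\big[[\Pi^{\mathrm{r}} - u]^{+}\big]$ and that for every realization of $\Pi^{\mathrm{r}}$ the map $u \mapsto [\Pi^{\mathrm{r}} - u]^{+}$ is non-increasing and $1$-Lipschitz. The cleanest route is to differentiate under the expectation: the pointwise derivative is $-\mathds{1}_{\{\Pi^{\mathrm{r}} > u\}}$ almost everywhere, uniformly bounded by $1$, so dominated convergence justifies the interchange and yields $g^{\prime}(u) = 1 - \mathbb{P}(\Pi^{\mathrm{r}} > u) = \mathbb{P}(\Pi^{\mathrm{r}} \leq u) \geq 0$.

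If the paper prefers to sidestep the measure-theoretic step, an elementary comparison works equally well: for any $u_2 > u_1$, the Lipschitz property gives the pointwise bound $[\Pi^{\mathrm{r}} - u_2]^{+} - [\Pi^{\mathrm{r}} - u_1]^{+} \geq -(u_2 - u_1)$, and taking expectations then produces $g(u_2) - g(u_1) \geq (u_2 - u_1) - (u_2 - u_1) = 0$. Either route establishes that $g$ is monotone non-decreasing.

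For the optimality statement, I would invoke Lemma~\ref{lemma2} directly: the expected welfare equals $e_i^{\mathrm{d}} g(u_i)$, which is linear in $e_i^{\mathrm{d}}$ with slope $g(u_i)$. Since the bid mechanism lets agent $i$ effectively select any $e_i^{\mathrm{d}} \in [0, e_i^{\mathrm{max}}]$—either by posting $\hat{\pi}_i^{\mathrm{d}}$ sufficiently above $\pi^{\mathrm{d}}$ with $\hat{e}_i^{\mathrm{d}} = e_i^{\mathrm{max}}$ to guarantee inclusion, or sufficiently below to guarantee exclusion—maximizing a linear function over a compact interval immediately gives the case analysis in \eqref{eq8} according to $\sgn(g(u_i))$, with indifference exactly at $g(u_i) = 0$.

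The main obstacle is minor: the monotonicity step is essentially bookkeeping once the Lipschitz observation is made, and the optimality step is a one-line consequence of Lemma~\ref{lemma2}. The only subtlety worth flagging is the translation between the agent's decision variables $(\hat{e}_i^{\mathrm{d}}, \hat{\pi}_i^{\mathrm{d}})$ and the realized allocation $e_i^{\mathrm{d}}$, which is already embedded in the market mechanism described in Section~\ref{sec:2} and does not require additional argument here.
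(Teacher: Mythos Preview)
Your proposal is correct. The optimality half is exactly what the paper does: it writes the constrained problem $\max_{e_i^{\mathrm d}\in[0,e_i^{\max}]} e_i^{\mathrm d}g(u_i)$, observes it is linear, and reads off the three cases from $\sgn(g(u_i))$.

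For monotonicity the destination is the same but your route is shorter. The paper assumes $\Pi^{\mathrm r}$ has a density $\rho_{\Pi^{\mathrm r}}$, expands both expectations as integrals, and after some rearrangement bounds $g(u_j)-g(u_i)\geq (u_j-u_i)\bigl(1-\int_{u_i}^{\infty}\rho_{\Pi^{\mathrm r}}(\sigma)\,d\sigma\bigr)\geq 0$, citing H\"older's inequality for the intermediate step. Your pointwise $1$-Lipschitz bound $[\Pi^{\mathrm r}-u_2]^{+}-[\Pi^{\mathrm r}-u_1]^{+}\geq -(u_2-u_1)$ followed by an expectation, or equivalently your derivative computation $g'(u)=1-\mathbb{P}(\Pi^{\mathrm r}>u)$, yields precisely the same lower bound without assuming a density and without the H\"older detour. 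So your argument is slightly more general and considerably more transparent, while the paper's version makes the connection to the tail probability of $\Pi^{\mathrm r}$ only implicitly through the integral $\int_{u_i}^{\infty}\rho_{\Pi^{\mathrm r}}$.
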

\begin{proof}
Let $u_j\geq u_i$, with $i,j\in\mathcal{N}$. Then, 	the mapping $u\mapsto g(u)$  is monotone increasing if 
\begin{equation}\label{eq9}
	g(u_j) - g(u_i) \geq 0.
\end{equation}
Let $\rho_{\Pi^\mathrm{r}}$ be the probability density function of $\Pi^\mathrm{r}$. Then, we have
\begin{align}\nonumber
	g(u_j) - g(u_i) = & u_j - u_i + \mathbb{E}_{\Pi^\mathrm{r}}\big[[\Pi^\mathrm{r}-u_j]^{+}\big] \\ \nonumber
	& -\mathbb{E}_{\Pi^\mathrm{r}}\big[[\Pi^\mathrm{r}-u_i]^{+}\big]
	 \\ \nonumber
= & u_j - u_i + \int_{u_j}^{\infty} \rho_{\Pi^\mathrm{r}}(\sigma)\big(\sigma-u_j\big) d\sigma \\ \nonumber
 & -\int_{u_i}^{\infty} \rho_{\Pi^\mathrm{r}}(\sigma)\big(\sigma-u_i\big) d\sigma \\ \nonumber
= & (u_j - u_i)\big(1-\int_{u_j}^{\infty} \rho_{\Pi^\mathrm{r}}(\sigma) d\sigma\big) \\ \nonumber
& -\int_{u_i}^{u_j} \rho_{\Pi^\mathrm{r}}(\sigma)\big(\sigma-u_i\big) d\sigma \\ \nonumber
\geq & (u_j - u_i)\big(1-\int_{u_i}^{\infty} \rho_{\Pi^\mathrm{r}}(\sigma) d\sigma\big) \\
\geq & 0
\end{align}
 The first inequality holds using the Hölder's inequality \cite{holder} and the seconds inequality holds since $0\leq \int_{u_i}^{\infty} \rho_{\Pi^\mathrm{r}}(\sigma) d\sigma \leq 1$ and  $u_j\geq u_i$. Therefore, we can conclude 
that the mapping $u\mapsto g(u)$ given in \eqref{eq4} is monotone increasing.

In order to obtain the optimal day-ahead schedule, each agent $i\in\mathcal{N}$ should solve the following optimization problem
\begin{equation}\label{eq13}
	\begin{split}
		&\max_{e^\mathrm{d}_{i}} \mathbb{E}_{\Pi^\mathrm{r}}[ W_i\vert \pi^\mathrm{d}, \Pi^\mathrm{r}, e_i^\mathrm{d}]  \\
		&~~\text{s.t.}~0\leq e^\mathrm{d}_{i}\leq e^{\mathrm{max}}_{i}.
	\end{split}
\end{equation}
 According to Lemma~\ref{lemma2}, the expected welfare of agent $i\in\mathcal{N}$ is given in \eqref{eq4}. Thus, the optimization problem \eqref{eq13} is a linear program, whose solution depends on $\sgn(g(u_i))$. Indeed, if $g(u_i)>0$, then the optimal solution to the problem \eqref{eq13} is the upper bound of $e^\mathrm{d}_{i}$, \textit{i.e.}, $e^\mathrm{d}_{i}=e^{\mathrm{max}}_{i}$, and if $g(u_i)< 0$, then the optimal solution to the problem \eqref{eq13} is the lower bound of $e^\mathrm{d}_{i}$, \textit{i.e.}, $e^\mathrm{d}_{i}=0$. Moreover, if $g(u_i)=0$, then the optimal solution to the problem \eqref{eq13} is indeterminate and agent $i$ is indifferent in this case since its expected welfare is always equal to zero for any day-ahead schedules. Hence, $e^\mathrm{d \ast}_{i}$ given in \eqref{eq8} is the optimal schedule for each agent $i\in\mathcal{N}$.  
\end{proof}

\begin{theorem}\textbf{{(Optimal bid in day-ahead market).}}\label{theorem1}
For each agent $i\in\mathcal{N}$, the optimal bid in the day-ahead market is given by 
\begin{align}\nonumber
	(\hat{e}^\mathrm{d}_{i},\hat{\pi}^\mathrm{d}_{i}) = & \big(e^{\mathrm{max}}_{i},u_i+\mathbb{E}_{\Pi^\mathrm{r}}\big[[\Pi^\mathrm{r}-u_i]^{+}\big]\big) \\ \label{eq14}
	= & \big(e^{\mathrm{max}}_{i},\mathbb{E}_{\Pi^\mathrm{r}}[\max(u_i,\Pi^\mathrm{r})]\big),
\end{align}
 which results in 
 the optimal schedule \eqref{eq8}.
\end{theorem}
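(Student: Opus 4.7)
The plan is to combine Lemma~3 with the pay-as-cleared rule stated in Section~II to pin down both components of the bid. From Lemma~3 we already know the desired allocation: agent $i$ wants to purchase $e_i^{\mathrm{max}}$ whenever $g(u_i)>0$ and nothing whenever $g(u_i)<0$. So the bid quantity $\hat{e}^\mathrm{d}_i=e_i^{\mathrm{max}}$ is forced immediately, since to realize the allocation $e_i^{\mathrm{d}}=e_i^{\mathrm{max}}$ the agent must have offered at least that quantity, and by linearity there is nothing to be gained by offering more.

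The task therefore reduces to choosing $\hat{\pi}^\mathrm{d}_i$ so that the clearing rule reproduces the sign condition of Lemma~3. Recall from the clearing rule that agent $i$ is served iff $\hat{\pi}^\mathrm{d}_i>\pi^\mathrm{d}$ and rejected iff $\hat{\pi}^\mathrm{d}_i<\pi^\mathrm{d}$. Unwrapping the definition of $g$ from Lemma~2, $g(u_i)>0$ is equivalent to
\begin{equation*}
\pi^\mathrm{d} < u_i+\mathbb{E}_{\Pi^\mathrm{r}}\bigl[[\Pi^\mathrm{r}-u_i]^{+}\bigr],
\end{equation*}
and $g(u_i)<0$ is equivalent to the reverse strict inequality. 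Hence setting the bid price equal to the threshold
\begin{equation*}
\hat{\pi}^\mathrm{d}_i \;=\; u_i+\mathbb{E}_{\Pi^\mathrm{r}}\bigl[[\Pi^\mathrm{r}-u_i]^{+}\bigr]
\end{equation*}
makes the clearing outcome coincide with the optimum of Lemma~3 for every realization of $\pi^\mathrm{d}$: the agent buys $e_i^{\mathrm{max}}$ exactly when $g(u_i)>0$ and is out of the market exactly when $g(u_i)<0$. The boundary event $\hat{\pi}^\mathrm{d}_i=\pi^\mathrm{d}$ (equivalently $g(u_i)=0$) is precisely the indifference case identified in Lemma~3, so any tie-breaking rule yields an optimal welfare of zero there.

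Finally, to obtain the compact second expression in \eqref{eq14}, I would simply rewrite
\begin{equation*}
u_i+[\Pi^\mathrm{r}-u_i]^{+} \;=\; \max(u_i,\Pi^\mathrm{r})
\end{equation*}
pointwise and take expectation, using linearity of $\mathbb{E}_{\Pi^\mathrm{r}}[\cdot]$. The only subtlety in the whole argument is to justify that bidding any higher price cannot strictly improve welfare (it only changes the behaviour on the measure-zero tie event where welfare is zero anyway), and that bidding any lower price strictly loses welfare on the interval between the new bid and the threshold; both follow from inspection of \eqref{eq4}. I do not expect a genuine obstacle here—the result is essentially a direct translation of Lemma~3 into bid language via the clearing rule.
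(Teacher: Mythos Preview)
Your proposal is correct and follows essentially the same approach as the paper: invoke Lemma~3 to identify the optimal schedule as a threshold rule in $\pi^\mathrm{d}$, unwrap the definition of $g(u_i)$ to read off the threshold $u_i+\mathbb{E}_{\Pi^\mathrm{r}}\bigl[[\Pi^\mathrm{r}-u_i]^{+}\bigr]$, and then observe that the pay-as-cleared rule with this bid price and quantity $e_i^{\mathrm{max}}$ reproduces that threshold rule exactly. Your version is in fact slightly more complete than the paper's, since you also argue that no other bid price can strictly improve welfare and you spell out the pointwise identity $u_i+[\Pi^\mathrm{r}-u_i]^{+}=\max(u_i,\Pi^\mathrm{r})$; the paper simply asserts the second expression without comment.
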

\begin{proof}
Following Lemma~\ref{lemma3}, for a given clearing price $\pi^\mathrm{d}$, the optimal schedule $e^\mathrm{d \ast}_{i}$ for each agent $i\in\mathcal{N}$ is given by
	\begin{equation}\label{eq15}
	e_i^\mathrm{d \ast} =  \begin{cases}
		e_i^\mathrm{\mathrm{max}}, &\quad\text{if}~u_i+\mathbb{E}_{\Pi^\mathrm{r}}\big[[\Pi^\mathrm{r}-u_i]^{+}\big]>\pi^\mathrm{d}\\
  	0, &\quad\text{if}~u_i+\mathbb{E}_{\Pi^\mathrm{r}}\big[[\Pi^\mathrm{r}-u_i]^{+}\big]<\pi^\mathrm{d}. \\
		\text{indifferent}, &\quad\text{otherwise}. \\ 
	\end{cases}
\end{equation}
Now, consider that each agent bids 
the consumption schedule  $\hat{e}_i^\mathrm{d}=e^{\mathrm{max}}_{i}$, with the price $\hat{\pi}^\mathrm{d}_{i}=u_i+\mathbb{E}_{\Pi^\mathrm{r}}\big[[\Pi^\mathrm{r}-u_i]^{+}\big]=\mathbb{E}_{\Pi^\mathrm{r}}[\max(u_i,\Pi^\mathrm{r})]$ in the day-ahead market. Then, according to the bid \eqref{eq14} and pay-as-cleared pricing rule, 
agent $i$ purchases the consumption schedule  $e^{\mathrm{max}}_{i}$ if $\hat{\pi}^\mathrm{d}_{{i}}>\pi^\mathrm{d}$, \textit{i.e.}, $e^\mathrm{d}_{{i}}=e^{\mathrm{max}}_{i}$, while it does not if $\hat{\pi}^\mathrm{d}_{{i}}<\pi^\mathrm{d}$, \textit{i.e.}, $e^\mathrm{d}_{{i}}=0$. Moreover, agent $i$ is indifferent to purchase the consumption schedule or not if $\hat{\pi}^\mathrm{d}_{{i}}=\pi^\mathrm{d}$. Consequently, bidding \eqref{eq14} is equivalent to consider the optimal schedule \eqref{eq15} for a given clearing price $\pi^\mathrm{d}$, which concludes the proof. 
\end{proof}

\subsection{Ordering of agent strategies} 
In this subsection, we investigate the ordering of agent strategies based on their utilities. In the following definition, we define the concept of Nash equilibrium \cite{sergio2}.
\begin{definition}\textbf{{(Nash equilibrium).}}\label{Definition1}
The set of admissible strategies for all agents $i\in\mathcal{N}$ is called a Nash equilibrium if each agent cannot increase its welfare
by changing its own strategies while the strategies of other agents are fixed. 
\end{definition}

\begin{lemma}\textbf{{(Trading or not on the day-ahead and redispatch market).}}\label{lemma5}
	Each agent $i\in\mathcal{N}$ can either not trade on, only trade on the day-ahead market or trade on both day-ahead and redispatch market. 
\end{lemma}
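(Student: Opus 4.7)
The plan is to enumerate the possible agent outcomes using the constraint from Lemma~\ref{lemma1} together with Theorem~\ref{theorem1}. The crucial observation is that in the redispatch optimization~\eqref{eq3}, the admissible set is $0 \leq e_i^\mathrm{r} \leq e_i^\mathrm{d}$, so an agent can never reduce more consumption on the redispatch market than it has already purchased on the day-ahead market. This single inequality is what rules out the ``only redispatch, no day-ahead'' configuration and leaves exactly the three cases in the statement.

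First, I would apply Theorem~\ref{theorem1} to split on the day-ahead outcome. For each agent $i\in\mathcal{N}$, the optimal day-ahead schedule is either $e_i^{\mathrm{d}\ast}=0$ or $e_i^{\mathrm{d}\ast}=e_i^{\mathrm{max}}$, depending on whether the optimal bid price $\hat{\pi}_i^\mathrm{d}=\mathbb{E}_{\Pi^\mathrm{r}}[\max(u_i,\Pi^\mathrm{r})]$ is below or above the day-ahead clearing price $\pi^\mathrm{d}$ (with indifference on the measure-zero event $\hat{\pi}_i^\mathrm{d}=\pi^\mathrm{d}$).

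In the first sub-case, $e_i^{\mathrm{d}\ast}=0$, the upper bound in $0\leq e_i^\mathrm{r}\leq e_i^\mathrm{d}$ collapses to $e_i^{\mathrm{r}\ast}=0$, so the agent trades on neither market. In the second sub-case, $e_i^{\mathrm{d}\ast}=e_i^{\mathrm{max}}$, I would invoke Lemma~\ref{lemma1} to classify the redispatch behaviour: if $\pi^\mathrm{r}<u_i$ then $e_i^{\mathrm{r}\ast}=0$ and the agent trades only on the day-ahead market, whereas if $\pi^\mathrm{r}>u_i$ then $e_i^{\mathrm{r}\ast}=e_i^\mathrm{d}=e_i^{\mathrm{max}}$ and the agent trades on both markets. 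The boundary $\pi^\mathrm{r}=u_i$ is an indifference case and may be assigned to either of the two sub-cases.

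Since these three cases exhaust all combinations of $(e_i^{\mathrm{d}\ast}, e_i^{\mathrm{r}\ast})$ produced by the optimal strategies, the statement follows. The whole argument is essentially a case enumeration, so there is no real obstacle beyond pointing out that the feasibility constraint $e_i^\mathrm{r}\leq e_i^\mathrm{d}$ of problem~\eqref{eq3} is what excludes the fourth (redispatch-only) configuration.
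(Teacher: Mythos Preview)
Your proposal is correct and follows essentially the same approach as the paper's own proof: both arguments split first on the day-ahead outcome (the paper via $\sgn(g(u_i))$ from Lemma~\ref{lemma3}, you via the equivalent comparison $\hat{\pi}_i^\mathrm{d}\lessgtr\pi^\mathrm{d}$ from Theorem~\ref{theorem1}) and then, for agents that do trade on the day-ahead market, invoke Lemma~\ref{lemma1} to decide the redispatch participation. Your explicit remark that the feasibility constraint $e_i^\mathrm{r}\le e_i^\mathrm{d}$ in~\eqref{eq3} is what excludes the redispatch-only configuration is a nice clarification that the paper leaves implicit.
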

\begin{proof}
	Following Theorem~\ref{theorem1}, the optimal bid \eqref{eq14} in the day-ahead market is equivalent to consider the optimal schedule \eqref{eq8} for a given clearing price $\pi^\mathrm{d}$. Also, according to Lemma~\ref{lemma3}, $g(u_i)$ given in \eqref{eq4} is monotone increasing and the optimal schedule \eqref{eq8} depends on $\sgn(g(u_i))$. We consider the following possible cases based on $\sgn(g(u_i))$:
	\begin{itemize}
		\item[(i)] $g(u_i)<0$. According to Lemma~\ref{lemma3}, the optimal schedule $e_i^\mathrm{d \ast}$ for each agent $i\in\mathcal{N}$ is equal to $0$, \textit{i.e.}, agent $i$ does not  trade on the day-ahead market. 
		\item[(ii)] $g(u_i)>0$. According to Lemma~\ref{lemma3}, the optimal schedule $e_i^\mathrm{d \ast}$ for each agent $i\in\mathcal{N}$ is equal to $e^\mathrm{max}_{{i}}$, \textit{i.e.}, agent $i$  trades on the day-ahead market. 
		 Follwoing Lemma~\ref{lemma1}, in the second stage, if $u_i<\pi^\mathrm{r}$ agent $i\in\mathcal{N}$ trades on both day-ahead and redispatch market; otherwise,  
		  it only trades on the day-ahead market. 
    \item[(ii)] $g(u_i)=0$. According to Lemma~\ref{lemma3}, the optimal schedule $e_i^\mathrm{d \ast}$ for each agent $i\in\mathcal{N}$ is indeterminate, \textit{i.e.}, agent $i$  is indifferent to trades on the day-ahead market or not. 
	\end{itemize}
\end{proof}

\begin{theorem}\textbf{{(Ordering of agent strategies for trading on both day-ahead and redispatch market).}}\label{theorem2}
Assume that there exists a Nash equilibrium for this game. Then, in the Nash equilibrium, there exist $u^\mathrm{d}$, $U^\mathrm{r}$ such that $u^\mathrm{d} \in [\ubar{u},\bar{u}]$, with $\ubar{u}:=\arg\inf_{u} g(u)\geq 0$, $\bar{u}:=\arg\inf_{u} g(u)>0$,  
and  for each realization of $u^\mathrm{r}\leftarrow U^\mathrm{r}$, $u^\mathrm{d}\leq u^\mathrm{r}$. Moreover, all the agent strategies are ordered along $u$, with $u^\mathrm{d}\leq u^\mathrm{r}$ such that for all $i\in\mathcal{N}$
\begin{itemize}
\item[(i)]  If $u_i< u^\mathrm{d}$, agent $i$ does not trade on the day-ahead and redispatch market;  
	
\item[(ii)] if $u^\mathrm{d}< u_i<u^{\mathrm{r}}$, agent $i$ trades on both day-ahead and redispatch market with its maximum energy consumption; 
	
\item[(iii)] if $u^\mathrm{r}< u_i$, agent $i$ only trades on the day-ahead market with its maximum energy consumption. 
\end{itemize}
\end{theorem}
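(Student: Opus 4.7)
The plan is to assemble Theorem~\ref{theorem1}, Lemma~\ref{lemma3}, and Lemma~\ref{lemma1} to read off two thresholds on the utility axis and verify the three regimes case by case. By Lemma~\ref{lemma3}, $g$ is monotone increasing, so the zero-set $\{u : g(u) = 0\}$ is exactly the closed interval $[\ubar{u},\bar{u}]$, which coincides with the \emph{indifference zone} for the day-ahead decision by Theorem~\ref{theorem1}; the strict super-level set $\{u : g(u) > 0\}$ is $(\bar{u},\infty)$ and the strict sub-level set $\{u : g(u) < 0\}$ is $(-\infty,\ubar{u})$. This three-way split of the utility axis is the backbone of the ordering.

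First I would pin down $u^\mathrm{d}$ as the Nash-equilibrium day-ahead participation threshold, which is forced to live in $[\ubar{u},\bar{u}]$ by market clearing: agents with $u_i$ strictly above it bid above $\pi^\mathrm{d}$ and are allocated $e^{\mathrm{max}}_{i}$ (Theorem~\ref{theorem1}), agents strictly below bid below $\pi^\mathrm{d}$ and are allocated zero, and the residual supply/demand is absorbed by indifferent marginal agents at $u_i = u^\mathrm{d}$. Next I would identify the second-stage threshold via Lemma~\ref{lemma1}: once the redispatch price $\pi^\mathrm{r}$ is realized, a day-ahead purchaser sells its full schedule back iff $u_i < \pi^\mathrm{r}$ and keeps it iff $u_i > \pi^\mathrm{r}$, so taking $U^\mathrm{r} := \Pi^\mathrm{r}$ produces the stochastic threshold claimed. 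With both thresholds in hand the three cases are immediate: (i) $u_i < u^\mathrm{d}$ gives $g(u_i) < 0$ and by Theorem~\ref{theorem1} no DA trade, and a fortiori no RD trade; (ii) $u^\mathrm{d} < u_i < u^\mathrm{r}$ gives $g(u_i) > 0$ (DA purchase of $e^{\mathrm{max}}_{i}$) and $u_i < \pi^\mathrm{r}$ (full RD sell-back by Lemma~\ref{lemma1}); (iii) $u_i > u^\mathrm{r}$ gives a DA purchase but $u_i > \pi^\mathrm{r}$, hence no RD sell-back.

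The main obstacle is rigorously justifying the inequality $u^\mathrm{d} \leq u^\mathrm{r}$ at every realization. My intended argument is: if some realization produced $\pi^\mathrm{r} < u^\mathrm{d}$, then every DA purchaser has $u_i \geq u^\mathrm{d} > \pi^\mathrm{r}$, so by Lemma~\ref{lemma1} none is willing to reduce and the redispatch market clears at zero volume; no agent's action depends on the precise value of $u^\mathrm{r}$ in that degenerate case, so $u^\mathrm{r}$ may be set equal to $u^\mathrm{d}$ without disturbing any equilibrium strategy, preserving the ordering. A careful treatment is also needed at the endpoints of the indifference zone $[\ubar{u},\bar{u}]$ to place $u^\mathrm{d}$ consistently and for tie-breaking at $u_i = u^\mathrm{d}$ or $u_i = u^\mathrm{r}$; apart from these boundary considerations, the result reduces to a direct chaining of the three earlier lemmas and Theorem~\ref{theorem1}.
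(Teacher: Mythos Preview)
Your proposal is correct and follows essentially the same route as the paper: both arguments use the monotonicity of $g$ from Lemma~\ref{lemma3} to partition the utility axis by the sign of $g$, invoke Theorem~\ref{theorem1} to read off the day-ahead threshold $u^{\mathrm d}\in[\ubar u,\bar u]$, and then apply Lemma~\ref{lemma1} realization-by-realization to obtain the redispatch threshold $u^{\mathrm r}$, with the no-congestion/degenerate case handled by collapsing $u^{\mathrm r}$ to $u^{\mathrm d}$. The only cosmetic difference is that the paper first phrases the monotonicity as two explicit ordering claims (``if $i$ trades on DA then so does $j$'' and ``if $i$ trades on DA only then so does $j$'') before extracting the thresholds, whereas you extract the thresholds directly from the level sets of $g$; the content is identical.
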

\begin{proof}
Firstly, we show that in the Nash equilibrium, for all $u_i\leq u_j$, with $i,j\in\mathcal{N}$, if agent $i$ trades on the day-ahead market and is not an indifferent agent, then agent $j$ also trades on the day-ahead market (and redispatch market or not). 
Since the  bid in the Nash equilibrium is optimal, following Theorem~\ref{theorem1}, the optimal bid \eqref{eq14} is equivalent to consider the optimal schedule \eqref{eq8} for a given clearing price $\pi^\mathrm{d}$. 
Also, in analogy with Lemma~\ref{lemma3}, the mapping $u\mapsto g(u)$ given in \eqref{eq4} is monotone increasing. Then, for all  $u_i\leq u_j$, with $i,j\in\mathcal{N}$, we have $g(u_i)\leq g(u_j)$. Now, let $u_i\leq u_j$ and agent $i$ trades on the day-ahead market (and redispatch market or not and is not an indifferent agent). Then, according to \eqref{eq8} in Lemma~\ref{lemma3}, $e_i^\mathrm{d \ast}=e_i^\mathrm{\mathrm{max}}$ and $g(u_i)>0$. Since $g(u_i)\leq g(u_j)$, we have $g(u_j)>0$.  
Therefore, according to Lemma~\ref{lemma3}, in the Nash equilibrium, the optimal schedule is $e_j^\mathrm{d \ast}=e_j^\mathrm{\mathrm{max}}$ and agent $j$ also trades on the day-ahead market. 

Next, we show that in the Nash equilibrium, for all $u_i\leq u_j$, with $i,j\in\mathcal{N}$, if agent $i$  only trades on the day-ahead market (not redispatch market) and is not an indifferent agent, then agent $j$ also only trades on the day-ahead market (not redispatch market).  
According to Lemma~\ref{lemma3}, the mapping $u\mapsto g(u)$ given in \eqref{eq4} is monotone increasing. Then, for all  $u_i\leq u_j$, with $i,j\in\mathcal{N}$, we have $g(u_i)\leq g(u_j)$. Now, let $u_i\leq u_j$ and agent $i$ only trades on the day-ahead market. Then, according to \eqref{eq8}  in Lemma~\ref{lemma3}, $e_i^\mathrm{d \ast}=e_i^\mathrm{\mathrm{max}}$, $g(u_i)>0$ and as we showed above, agent $j$ trades on the day-ahead market. We know that agent $i$ does not trade on the redispatch market, then we have $e^\mathrm{r}_{i}=0$.  Since the second stage bid in the Nash equilibrium is optimal, we have $e^\mathrm{r \ast}_{i}=e^\mathrm{r}_{i}=0$. Then, following Lemma~\ref{lemma1}, we have $\pi^\mathrm{r}<u_i$. Since $u_i\leq u_j$, we have $\pi^\mathrm{r}<u_j$. Consequently, following Lemma~\ref{lemma1}, in the Nash equilibrium $e^\mathrm{r \ast}_{j}=e^\mathrm{r}_{j}=0$ and agent $j$ does not trade on the redispatch market. 
	
Therefore, as we showed above, in the Nash equilibrium, for all $u_i\leq u_j$, with $i,j\in\mathcal{N}$, if agent $i$ trades on the day-ahead market and is not an indifferent agent, then agent $j$ also trades on the day-ahead market and if agent $i$ only trades on the day-ahead market and is not an indifferent agent, then agent $j$ also only trades on the day-ahead market. Indeed, all the agent strategies are ordered along $u$. Now, consider the utility order  $u_1\leq u_2\leq \dots \leq u_n$, where without loss of generality the agents are labeled based on the order of $u_i$. In analogy with Lemma~\ref{lemma3}, the mapping $u\mapsto g(u)$ given in \eqref{eq4} is monotone increasing. Then, we have 
\begin{equation}\label{eq18}
g(u_1)\leq g(u_2)\leq \dots \leq g(u_n).
\end{equation}

Now, we define $u^\mathrm{d} \in [\ubar{u},\bar{u}]$, with $\ubar{u}:=\arg\inf_{u} g(u)\geq 0$ and $\bar{u}:=\arg\inf_{u} g(u)>0$.  
Indeed, the agents with $u_i\in[\ubar{u},\bar{u}]$ are  indifferent agents, \textit{i.e.}, $g(u_i)=0$. Therefore, if $u_i< u^\mathrm{d}$, we have $g(u_i)< 0$.  
Then,  according to \eqref{eq8}  in Lemma~\ref{lemma3}, 
in the Nash equilibrium, 
agent $i$ does not trade on the day-ahead market when $g(u_i)< 0$, \textit{i.e.}, the optimal schedule is $e^\mathrm{d \ast}_{i}=0$. 
However, if $u_i> u^\mathrm{d}$, then we have $g(u_i)> 0$.  
Then, according to \eqref{eq8} in Lemma~\ref{lemma3}, in the Nash equilibrium, agent $i$ trades on the day-ahead market when $g(u_i)>0$, \textit{i.e.}, the optimal schedule is $e^\mathrm{d \ast}_{i} = e_i^\mathrm{\mathrm{max}}$. 
Now, based on the inflexible load demand realization, we have the following cases for each realization: 
\begin{itemize}
\item[(i)]
There exists no congestion. Then, there is no redispatch market and the agents with  $u_i> u^\mathrm{d}$ only trades on the day-ahead market. 
Thus, we consider $u^\mathrm{r}=u^\mathrm{d}$. 
\item[(ii)] 
There exists a congestion. Then, we consider $U^\mathrm{r}=\Pi^\mathrm{r}$ and for each realization $\pi^\mathrm{r}\leftarrow\Pi^\mathrm{r}$, we consider $u^\mathrm{r}=\pi^\mathrm{r}$. Following Lemma~\ref{lemma1}, if $u^\mathrm{d}<u_i<u^\mathrm{r}=\pi^\mathrm{r}$, the optimal second stage bid is $e_i^\mathrm{r \ast}=e_i^\mathrm{d \ast}=e_i^\mathrm{\mathrm{max}}\neq 0$. Thus, if $u^\mathrm{d}< u_i<u^\mathrm{r}$ in the Nash equilibrium, agent $i$ trades on both day-ahead and redispatch market.  
Moreover, following Lemma~\ref{lemma1}, if $u_i> u^\mathrm{r}=\pi^\mathrm{r}$, the optimal second stage bid is $e_i^\mathrm{r \ast}= 0$. Thus, if 
$u^\mathrm{r}< u_i$ in the Nash equilibrium, agent $i$  only trades on the day-ahead market. 
\end{itemize}

\end{proof}
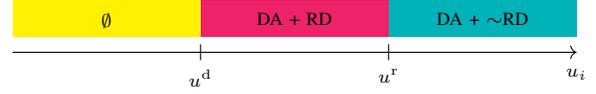
\begin{figure}[t]
	\begin{tikzpicture}[%
		every node/.style={
			font=\scriptsize,},
		]
	    
		\draw[->] (0,0) -- (7.5,0);
		
		\foreach \x in {2.5,5}{
			\draw (\x cm,3pt) -- (\x cm,-3pt);
		}

		\fill[yellow] (0,0.2) rectangle (2.5,0.7) ;
		\fill[WildStrawberry] (2.5,0.2) rectangle (5,0.7);
 
		\fill[BlueGreen] (5,0.2) rectangle (7.5,0.7); 
    \node[anchor=north] at (1.25,0.65) {$\emptyset$};
    \node[anchor=north] at (3.75,0.65) {DA + RD};
    \node[anchor=north] at (6.25,0.65) {DA + $\sim$RD};
   \node[anchor=north] at (2.5,-0.1) {$u^{\mathrm{d}}$};
    \node[anchor=north] at (5,-0.1) {$u^{\mathrm{r}}$};
    \node[anchor=north] at (7.5,-0.1) {$u_i$};
    
	\end{tikzpicture}
	\centering
	\caption{The agent outcomes based on their utilities in the Nash equilibrium.} 
	\label{fig1.1}
\end{figure}

We can conclude from Theorem~\ref{theorem2} that if there exists a Nash equilibrium for this game, then $u^\mathrm{d}$ and $u^\mathrm{r}$ can be used for summarizing all agent outcomes in the Nash equilibrium.
Fig.~\ref{fig1.1}  demonstrates the agent outcomes based on their utilities in the Nash equilibrium. 

\subsection{Stochastic mean field game}
In this subsection,  we model the increase-decrease game for large populations of energy consumers using a stochastic mean field game 
approach. 

In this approach, since we do not aim at considering the market power in the increase-decrease game applied to distribution network congestion, the population of agents should be considered large. Then, we use the mean filed game theory for modeling and analyzing
the increase-decrease game with the infinite number of agents. Now, let $\eta(u)$ be the total day-ahead flexible energy demand  probability density function with utility $u$. Then, define
\begin{equation}\label{phi}
	\phi(u):=\int_{u}^{\infty}\eta(w)dw, 
\end{equation}
 which denotes the total day-ahead flexible energy demand of the agents with the utility greater than or equal to $u$. Furthermore, let $f^{\mathrm{d}}(\cdot)$ be the supply function and $u^\mathrm{min}$, $u^\mathrm{max}$ be the minimum and maximum values for the utilities of the agents, respectively. 
 Now, we consider the following assumption on the functions $f^{\mathrm{d}}(\cdot)$,  $\phi(\cdot)$ and $\eta(\cdot)$. 

 \begin{assumption}\textbf{{(Conditions for $f^{\mathrm{d}}(\cdot)$, $\phi(\cdot)$, and $\eta(\cdot)$).}}\label{assumption1}
The supply function $f^{\mathrm{d}}(\cdot)$ is continuous and monotone  increasing,     
function $\phi(\cdot)$ is continuous, and if $u\in[u^\mathrm{min},u^\mathrm{max}]$, $\eta(u)>0$; otherwise,  $\eta(u)=0$. 
 \end{assumption}
 We note that according to \eqref{phi}, Assumption~\ref{assumption1} and the fact that the total day-ahead flexible energy demand  probability density function $\eta(u)$ is positive if $u\in[u^\mathrm{min},u^\mathrm{max}]$, we can show that the function $\phi(u)$ is strictly  decreasing if $u\in[u^\mathrm{min},u^\mathrm{max}]$ and continuous. Furthermore, Assumption~\ref{assumption1} implies that there exist no agents with utilities outside the  interval $[u^\mathrm{min},u^\mathrm{max}]$. 

\begin{lemma}\textbf{{(Optimal redispatching).}} \label{lemma13}
Let Assumption~\ref{assumption1} hold and  assume that there exists a Nash equilibrium for this game. Consider
\begin{equation}\label{eq28}
d_0+D+\phi(U^\mathrm{c})=c,
\end{equation}
where $U^\mathrm{c}$ is the congestion price. Then, we have the following cases:
\begin{itemize}
\item[(i)] The solution $U^\mathrm{c}=\phi^{-1}(c-d_0-D)$ to \eqref{eq28} exists. 
\item[(ii)] If $d_0+D+\phi(u)>c$ for all $u\in[u^\mathrm{min},u^\mathrm{max}]$ (the network is always congested), then the solution to \eqref{eq28} does not exist and we consider $U^\mathrm{c}=u^\mathrm{max}$. 
\item[(iii)] If $d_0+D+\phi(u)<c$ for all $u\in[u^\mathrm{min},u^\mathrm{max}]$ (the network is never congested), then the solution to \eqref{eq28} does not exist and we consider $U^\mathrm{c}=u^\mathrm{min}$. 
\end{itemize}
Moreover, in the Nash equilibrium, if $U^\mathrm{c}\leq u^\mathrm{d}$ there is no congestion and $U^\mathrm{r}=u^\mathrm{d}$; otherwise, there is a congestion, 
the optimal redispatching is given by \eqref{eq28} and $U^\mathrm{r}=U^\mathrm{c}$. 
  
\end{lemma}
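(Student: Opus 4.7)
The plan is to split the argument into two parts: first, the purely analytical existence/uniqueness claim for the fixed-point equation \eqref{eq28}, and second, the link between that fixed point and the Nash-equilibrium redispatch outcome.

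For the first part I would invoke the structural properties derived from Assumption~\ref{assumption1}: $\phi$ is continuous on $[u^\mathrm{min},u^\mathrm{max}]$ and, because $\eta>0$ on that interval, $\phi$ is strictly decreasing there. Therefore the map $u\mapsto d_0+D+\phi(u)-c$ is continuous and strictly decreasing on $[u^\mathrm{min},u^\mathrm{max}]$, so by the intermediate value theorem exactly one of three situations occurs: either it has a sign change in the interval, in which case a unique root $U^\mathrm{c}=\phi^{-1}(c-d_0-D)$ exists (case (i)); or it is everywhere strictly positive, meaning not even the most aggressive trimming of the flexible demand can relieve the line (case (ii), so we set $U^\mathrm{c}=u^\mathrm{max}$ as the extremal clearing point); or it is everywhere strictly negative, meaning the line is un-congested even if every agent consumes (case (iii), giving $U^\mathrm{c}=u^\mathrm{min}$). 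These conventions extend $\phi^{-1}$ continuously to the boundary and encode the fact that, at these extremes, $U^\mathrm{c}$ no longer corresponds to an interior market-clearing price but to a boundary regime.

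For the second part I would exploit the ordering in Theorem~\ref{theorem2}. At the Nash equilibrium the only agents holding a day-ahead position are those with $u_i>u^\mathrm{d}$, and each such agent purchases its maximum $e_i^\mathrm{max}$; hence the aggregate day-ahead flexible consumption is $\phi(u^\mathrm{d})$, and the realised physical flow after day-ahead clearing is $d_0+D+\phi(u^\mathrm{d})$. By monotonicity of $\phi$ the inequality $U^\mathrm{c}\le u^\mathrm{d}$ is equivalent to $\phi(u^\mathrm{d})\le\phi(U^\mathrm{c})=c-d_0-D$, i.e.\ to $d_0+D+\phi(u^\mathrm{d})\le c$. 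Thus $U^\mathrm{c}\le u^\mathrm{d}$ is exactly the condition for no congestion, and in that case no redispatch market opens, so by definition $U^\mathrm{r}=u^\mathrm{d}$ as announced in Theorem~\ref{theorem2}(i).

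When instead $U^\mathrm{c}>u^\mathrm{d}$, congestion is present and a redispatch market is opened at some price $\pi^\mathrm{r}$. By Lemma~\ref{lemma1} every rational agent with $u_i<\pi^\mathrm{r}$ sells its full day-ahead position back and every agent with $u_i>\pi^\mathrm{r}$ keeps it, so the post-redispatch flexible consumption equals $\phi(\pi^\mathrm{r})$. Minimum-cost clearing of the redispatch market requires procuring just enough reductions to bring the net flow down to capacity, which is precisely $d_0+D+\phi(\pi^\mathrm{r})=c$, i.e.\ $\pi^\mathrm{r}=U^\mathrm{c}$. Identifying $U^\mathrm{r}$ with the realised redispatch price $\pi^\mathrm{r}$ (as prescribed in Theorem~\ref{theorem2}(ii)) yields $U^\mathrm{r}=U^\mathrm{c}$. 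The main subtlety I expect is the handling of indifferent agents (the flat region of $g$ on $[\ubar u,\bar u]$ and the measure-zero set $\{u_i=\pi^\mathrm{r}\}$): because $\phi$ is continuous and strictly decreasing, any tie-breaking choice for these agents can be absorbed into the value of $U^\mathrm{c}$ itself, so the equilibrium identity $U^\mathrm{r}=U^\mathrm{c}$ is preserved, and the three boundary cases (i)--(iii) for \eqref{eq28} inherit their interpretation consistently.
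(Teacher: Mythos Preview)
Your proposal is correct and follows essentially the same route as the paper: both arguments use Assumption~\ref{assumption1} to obtain strict monotonicity and continuity of $\phi$ on $[u^\mathrm{min},u^\mathrm{max}]$ for the trichotomy in \eqref{eq28}, then invoke the ordering of Theorem~\ref{theorem2} to identify the aggregate day-ahead flexible demand with $\phi(u^\mathrm{d})$, compare $d_0+D+\phi(u^\mathrm{d})$ to $c$, and conclude via the DSO's least-cost redispatch that $U^\mathrm{r}=U^\mathrm{c}$ in the congested case. Your version is in fact slightly more explicit than the paper's---you spell out the intermediate value argument, the biconditional $U^\mathrm{c}\le u^\mathrm{d}\Leftrightarrow d_0+D+\phi(u^\mathrm{d})\le c$, and the tie-breaking for indifferent agents---but the structure and key ideas coincide.
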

\begin{proof}
According to \eqref{phi}, $\phi(U^\mathrm{c})$ is given by
\begin{equation}\label{eq29}
\phi(U^\mathrm{c}) = \int_{U^\mathrm{c}}^{\infty}\eta(w)dw,
\end{equation}
which implies the total  
flexible energy demand of the agents with the utility greater than or equal to $U^\mathrm{c}$.  Following Assumption~\ref{assumption1}, if $u\in[u^\mathrm{min},u^\mathrm{max}]$, the function $\phi(u)$ is strictly decreasing and continuous, then the function $\phi^{-1}(\cdot)$ exists and is continuous. Therefore, if there exists a $U^\mathrm{c}$ satisfying  \eqref{eq28}, $U^\mathrm{c}=\phi^{-1}(c-d_0-D)$ is the solution to \eqref{eq28}; otherwise, we consider $U^\mathrm{c}=u^\mathrm{max}$, if $d_0+D+\phi(u)>c$ for all $u\in[u^\mathrm{min},u^\mathrm{max}]$, \textit{i.e.}, the network is always congested and $U^\mathrm{c}=u^\mathrm{min}$ if $d_0+D+\phi(u)<c$ for all $u\in[u^\mathrm{min},u^\mathrm{max}]$, \textit{i.e.}, the network is never congested.  Following Theorem~\ref{theorem2}, in the Nash equilibrium,  all agents $i\in\mathcal{N}$ with $u_i< u^\mathrm{d}$ do not trade on the day-ahead market, with $u_i> u^\mathrm{d}$ trade on the day-ahead market and with $u_i> U^\mathrm{r}$ only trade on the day-ahead market (not redispatch market). Hence, according to Assumption~\ref{assumption1}, if $U^\mathrm{c}\leq u^\mathrm{d}$,  the total load demand (flexible and inflexible) is less than the network capacity, \textit{i.e.}, $d_0+D+\phi(u^\mathrm{d})\leq c$, then there is no congestion and $U^\mathrm{r}=u^\mathrm{d}$; otherwise, $d_0+D+\phi(u^\mathrm{d})>c$, then there is a congestion, 
$U^\mathrm{r}>u^\mathrm{d}$  and $\phi(U^\mathrm{r})$ is the total flexible energy demand of agents only trading on the day-ahead market (not redispatch market). 
Therefore, if the congestion problem occurs, 
the total flexible energy demand after redispatching the schedules becomes $\phi(U^\mathrm{r})$, \textit{i.e.},  the total day-ahead flexible energy demand of the agents with the utility greater than or equal to $U^\mathrm{r}$. 
 Now, we know that if congestion  occurs, the lowest cost solution for the DSO is that the total load demand (flexible and inflexible)  be equal to the  network capacity after redispatching the schedules, \textit{i.e.},  
\begin{equation}\label{eq32}
	d_0 + D + \phi(U^\mathrm{r}) = c.
\end{equation}
which leads to $U^\mathrm{r}=U^\mathrm{c}$. 
Indeed, in order to achieve the lowest cost solution for the DSO (optimal redispatching), $U^\mathrm{c}$ should satisfy \eqref{eq28} which concludes the proof. 
\end{proof}

\begin{lemma}\textbf{{(Redispatch price).}}\label{lemma11}
Assume that there exists a Nash equilibrium for this game. Then, in the Nash equilibrium,  
the redispatch price is given by
\begin{equation}\label{pir}
\Pi^\mathrm{r} =   \begin{cases}
		0, &\quad\text{if}~  U^\mathrm{c}\leq u^\mathrm{d}\\
		U^\mathrm{c}, &\quad\text{otherwise.} \\ 
	\end{cases}
\end{equation}
\end{lemma}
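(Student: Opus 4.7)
The plan is to case-split on the two clauses of \eqref{pir} and, in each case, identify the marginal agent whose utility pins down the pay-as-cleared redispatch price.

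For the first case $U^\mathrm{c}\leq u^\mathrm{d}$, I would invoke Lemma~\ref{lemma13} directly: this inequality characterizes the no-congestion regime, so $d_0+D+\phi(u^\mathrm{d})\leq c$ and the DSO does not open the redispatch market. Since no second-stage transactions take place, the convention $\Pi^\mathrm{r}=0$ applies, which matches the first branch of \eqref{pir}.

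For the second case $U^\mathrm{c}>u^\mathrm{d}$, the argument combines three ingredients. Lemma~\ref{lemma13} gives the presence of congestion and the identity $U^\mathrm{r}=U^\mathrm{c}$, together with the optimal redispatching condition $d_0+D+\phi(U^\mathrm{r})=c$. Theorem~\ref{theorem2} locates the Nash-equilibrium partition of agents along utilities: those with $u^\mathrm{d}<u_i<U^\mathrm{r}$ trade in both markets (i.e., reduce in the redispatch market), while those with $u_i>U^\mathrm{r}$ only trade in the day-ahead market. Lemma~\ref{lemma1} states that an individual agent reduces its schedule in the redispatch market iff $\pi^\mathrm{r}>u_i$. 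For these three descriptions to be consistent, so that $U^\mathrm{r}$ is exactly the indifference threshold between reducers and non-reducers, the pay-as-cleared clearing price must equal $\Pi^\mathrm{r}=U^\mathrm{r}=U^\mathrm{c}$.

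The step I expect to be the main obstacle is pinning the price down \emph{exactly} rather than just bracketing it. For that I would argue by contradiction on each side, using the strict monotonicity and continuity of $\phi$ on $[u^\mathrm{min},u^\mathrm{max}]$ guaranteed by Assumption~\ref{assumption1}. If $\Pi^\mathrm{r}<U^\mathrm{c}$, Lemma~\ref{lemma1} would leave too few agents willing to reduce (only those with $u_i<\Pi^\mathrm{r}<U^\mathrm{c}$), so the post-redispatch flexible demand would strictly exceed $c-d_0-D$ and congestion would remain, contradicting Lemma~\ref{lemma13}. If $\Pi^\mathrm{r}>U^\mathrm{c}$, then agents with $u_i\in(U^\mathrm{c},\Pi^\mathrm{r})$ would also reduce, driving the flexible demand strictly below $c-d_0-D$ and violating the lowest-cost (equality) optimal redispatching condition $\phi(U^\mathrm{r})=c-d_0-D$ from Lemma~\ref{lemma13}. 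Uniqueness of the price then follows, giving $\Pi^\mathrm{r}=U^\mathrm{c}$ and completing the proof.
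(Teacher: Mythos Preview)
Your proposal is correct and follows essentially the same approach as the paper: the same case split on $U^\mathrm{c}\lessgtr u^\mathrm{d}$, the same appeals to Lemma~\ref{lemma13} (no congestion versus $U^\mathrm{r}=U^\mathrm{c}$), Theorem~\ref{theorem2} (ordering of reducers versus non-reducers), and Lemma~\ref{lemma1} (individual threshold $u_i\lessgtr\pi^\mathrm{r}$) to identify $\Pi^\mathrm{r}$ with the threshold $U^\mathrm{r}$. The paper's own proof is terser at the final step, simply invoking local optimality of the Nash equilibrium to equate the two thresholds, whereas you spell this out via a two-sided contradiction using the strict monotonicity of $\phi$; this is the same idea made more explicit, not a different route.
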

\begin{proof}
Following Theorem~\ref{theorem2},   
if $u^\mathrm{d}\leq u_i<U^\mathrm{r}$, agent $i\in\mathcal{N}$ trades on both day-ahead and redispatch market and if $u_i\geq U^\mathrm{r}$, it only trades on the day-ahead market. According to Lemma~\ref{lemma13}, if $U^\mathrm{c}\leq u^\mathrm{d}$, there is no congestion and no redispatch market. Consequently, the redispatch market price $\Pi^\mathrm{r}$ is equal to zero. However, following Lemma~\ref{lemma13}, if $U^\mathrm{c}> u^\mathrm{d}$, there is a congestion and a redispatch market. Therefore, according to Lemma~\ref{lemma1}  and Theorem~\ref{theorem2}, 
if $u^\mathrm{d}\leq u_i<\Pi^\mathrm{r}$, agent $i$ optimally trades on the redispatch market; otherwise, it does not and following Lemma~\ref{lemma13}, we have $U^\mathrm{r}=U^\mathrm{c}$.  Thus, due to the fact that the Nash equilibrium is locally optimal, we obtain   $\Pi^\mathrm{r}=U^\mathrm{c}$. 

\end{proof}

In the following, let 
$\rho_{D}$  be the probability density function of $D$. Then, in the following corollary, we show the bid price function as a function of the utility.  
\begin{corollary}\textbf{{(Bid price function).}}\label{corollary0}
	The bid price function is given by
	\begin{equation}\label{BidPrice}
		\hat{\pi}^\mathrm{d}(u|u^\mathrm{d}) = \mathbb{E}_{\Pi^\mathrm{r}}[\max(u,\Pi^\mathrm{r})], 
	\end{equation}
where 
	\begin{equation}
		\max(u,\Pi^\mathrm{r})=\begin{cases}
		u, &\quad\text{if}~  U^\mathrm{c}\leq \max(u^\mathrm{d},u)\\
		 U^\mathrm{c}, &\quad\text{otherwise.} \\ 
	\end{cases}
	\end{equation}
\end{corollary}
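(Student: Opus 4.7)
The plan is to chain Theorem~\ref{theorem1} and Lemma~\ref{lemma11} and then simplify the inner $\max$ by a short case analysis. Theorem~\ref{theorem1} already establishes that the welfare-maximising day-ahead bid price of an agent with utility $u$ equals $\mathbb{E}_{\Pi^\mathrm{r}}[\max(u,\Pi^\mathrm{r})]$; writing this as a function $\hat{\pi}^\mathrm{d}(u|u^\mathrm{d})$ with explicit conditioning on $u^\mathrm{d}$ only makes the dependence of the distribution of $\Pi^\mathrm{r}$ on the equilibrium threshold explicit, so the first displayed equation is an immediate restatement of that theorem applied pointwise to a generic utility value.

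For the second displayed equation, I would substitute the piecewise expression for $\Pi^\mathrm{r}$ supplied by Lemma~\ref{lemma11} and split into three cases. When $U^\mathrm{c}\leq u^\mathrm{d}$ we have $\Pi^\mathrm{r}=0$, and since utilities are strictly positive by definition (recall $u_i\in\R_{>0}$ from the notation section) we obtain $\max(u,\Pi^\mathrm{r})=u$; this case is covered by the stated condition $U^\mathrm{c}\leq\max(u^\mathrm{d},u)$. When $U^\mathrm{c}>u^\mathrm{d}$ we have $\Pi^\mathrm{r}=U^\mathrm{c}$, and a further subdivision on whether $u\geq U^\mathrm{c}$ or $u<U^\mathrm{c}$ yields $\max(u,\Pi^\mathrm{r})=u$ and $\max(u,\Pi^\mathrm{r})=U^\mathrm{c}$ respectively. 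The first sub-branch merges with the $U^\mathrm{c}\leq u^\mathrm{d}$ case into the single region $U^\mathrm{c}\leq\max(u^\mathrm{d},u)$, while the second sub-branch is the complementary region, giving the two-branch formula in the statement.

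The argument is essentially algebraic and I do not anticipate any genuine obstacle; the only aspect requiring care is verifying that the three-way case split collapses into the stated two-way split, which it does precisely because both $U^\mathrm{c}\leq u^\mathrm{d}$ and ($U^\mathrm{c}>u^\mathrm{d}$ with $u\geq U^\mathrm{c}$) are equivalent to $U^\mathrm{c}\leq\max(u^\mathrm{d},u)$. No expectation over the randomness in $U^\mathrm{c}$ needs to be evaluated at this stage, since the corollary merely rewrites the integrand of $\mathbb{E}_{\Pi^\mathrm{r}}[\cdot]$ in a form that exposes how the bid price depends jointly on the agent's own utility $u$ and on the realisation of the congestion price.
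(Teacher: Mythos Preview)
Your proposal is correct and follows essentially the same route as the paper: the paper's own proof is a one-line remark that the corollary follows by applying Lemmas~\ref{lemma13} and~\ref{lemma11} to the bid price in Theorem~\ref{theorem1}, and you have simply spelled out the case analysis that this application entails. The only cosmetic gap is that you cite Lemma~\ref{lemma11} but not Lemma~\ref{lemma13}; since $U^\mathrm{c}$ is introduced and characterised in Lemma~\ref{lemma13}, you should reference it as well so that the symbol is properly grounded.
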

\begin{proof}
The bid price function \eqref{BidPrice} is obtained by applying Lemmas \ref{lemma13}  
and \ref{lemma11} in the bid price 
 given in Theorem~\ref{theorem1}.
\end{proof}

In the following lemma, the relationship between supply function and clearing price is shown. 

\begin{lemma}\textbf{{(Supply function and clearing price).}}\label{lemma10}
Let Assumption~\ref{assumption1} 
hold and assume that there exists a Nash equilibrium for this game. Consider the day-ahead clearing price $\pi^\mathrm{d}$ and supply function $f^{\mathrm{d}}(\cdot)$, then in the Nash equilibrium, we have
\begin{equation}\label{eq21}
	\pi^\mathrm{d}=f^{\mathrm{d}}(d_0 + \phi(u^\mathrm{d})).
\end{equation}
\end{lemma}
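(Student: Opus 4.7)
My plan is to read off the claim almost directly from Theorem~\ref{theorem2} and Theorem~\ref{theorem1}, together with the standard interpretation of the supply function $f^{\mathrm{d}}(\cdot)$ as mapping aggregate day-ahead quantity to the pay-as-cleared price. The key observation is that, at the Nash equilibrium, the demand side of the day-ahead market is completely characterized by the threshold $u^\mathrm{d}$.

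First, I would invoke Theorem~\ref{theorem2} to identify the trading set: in the Nash equilibrium, every agent with $u_i>u^\mathrm{d}$ participates in the day-ahead market, while every agent with $u_i<u^\mathrm{d}$ does not; the measure-zero set with $u_i=u^\mathrm{d}$ is indifferent and can be ignored. Theorem~\ref{theorem1} (or equivalently Lemma~\ref{lemma3}) then tells me that each participating agent submits its maximum consumption $e_i^{\mathrm{max}}$. Using the definition \eqref{phi} of $\phi(\cdot)$ as the integrated flexible-demand density above a given utility level, the aggregate flexible day-ahead demand therefore equals
\begin{equation*}
\int_{u^\mathrm{d}}^{\infty}\eta(w)\,dw=\phi(u^\mathrm{d}).
\end{equation*}

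Second, I would add the anticipated inflexible component $d_0$ (only the anticipated value is known at the time of day-ahead clearing; the stochastic realization $D$ enters at Stage~2) to obtain the total day-ahead quantity $d_0+\phi(u^\mathrm{d})$. Applying the supply function, which under Assumption~\ref{assumption1} is continuous and monotone increasing and thus well-defined at this quantity, yields
\begin{equation*}
\pi^\mathrm{d}=f^{\mathrm{d}}\bigl(d_0+\phi(u^\mathrm{d})\bigr),
\end{equation*}
which is the desired identity.

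The only subtle point, and the one I would spend some care on, is the consistency between the threshold $u^\mathrm{d}$ and the clearing price: since Theorem~\ref{theorem1} gives the bid price as $\hat{\pi}^\mathrm{d}(u)=\mathbb{E}_{\Pi^\mathrm{r}}[\max(u,\Pi^\mathrm{r})]$, which is strictly increasing in $u$, the bid-ordering along utilities is preserved, and hence the marginal trader at $u=u^\mathrm{d}$ is indeed the indifferent agent at the clearing price. This guarantees that the quantity $d_0+\phi(u^\mathrm{d})$ is exactly the cleared quantity, not just a feasible one, so inverting through $f^{\mathrm{d}}(\cdot)$ gives a genuine equality rather than an inequality. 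I do not expect a real obstacle here beyond carefully invoking Theorem~\ref{theorem2} and the monotonicity of $\hat{\pi}^\mathrm{d}(u)$ to rule out alternative market-clearing configurations.
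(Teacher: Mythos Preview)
Your proposal is correct and follows essentially the same route as the paper: invoke Theorem~\ref{theorem2} to obtain the threshold structure, compute the aggregate flexible demand as $\phi(u^\mathrm{d})$, add $d_0$, and apply the supply function using Assumption~\ref{assumption1}. Your final paragraph on consistency is extra care the paper does not include; note, however, that strict monotonicity of $\hat{\pi}^\mathrm{d}(u)$ is only established later (Theorem~\ref{theorem4}, under Assumption~\ref{assumption2}), so at this stage you should appeal to the monotone increase from Lemma~\ref{lemma3} rather than strict increase.
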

\begin{proof}
Following Theorem~\ref{theorem2}, in the Nash equilibrium, the agent strategies are ordered along $u$, with $u^\mathrm{d}\leq u^\mathrm{r}$ such that for all $i\in\mathcal{N}$ if $u_i < u^\mathrm{d}$, agent $i$ does not trade on the day-ahead market, if $u_i\geq u^\mathrm{d}$, agent $i$ trades on day-ahead market. 
Then, the total day-ahead flexible energy demand is given by
\begin{align}\nonumber
\text{Total day-ahead flexible demand}  = &  \underbrace{0\times\int_{0}^{u^\mathrm{d}}\eta(w)dw}_{\text{no agent trades}} \\ \nonumber 
 &  + 1\times\int_{u^\mathrm{d}}^{\infty}\eta(w)dw \\ \label{eq22}
	= & \int_{u^\mathrm{d}}^{\infty}\eta(w)dw =\phi(u^\mathrm{d}).
\end{align}
Then, the total anticipated load demand in the day-ahead market is sum of the anticipated inflexible load demand and total flexible load demand, \textit{i.e.},  $\phi(u^\mathrm{d}) + d_0$.  
Moreover, under Assumption~\ref{assumption1}, the supply function $f^{\mathrm{d}}(\cdot)$ and function $\phi(\cdot)$ are continuous. 
Then, the clearing price $\pi^\mathrm{d}$ should be equal to the price which is determined by the supply function $f^{\mathrm{d}}(\cdot)$ of energy demand level $d_0 + \phi(u^\mathrm{d})$, \textit{i.e.},
\begin{equation}\label{eq24}
 \pi^{\mathrm{d}}	= f^{\mathrm{d}}(d_0 + \phi(u^\mathrm{d})).
\end{equation}
which concludes the proof.
\end{proof}

\begin{remark}\textbf{(The problem of fixed bid price function).} \label{remark1}
    We note that if the bid price function \eqref{BidPrice} becomes a fixed value, 
the market may fail to determine which agents can trade on the day-ahead market and which cannot. Indeed,  the day-ahead price can be cleared at the flat part of the bid price function. Then, the load demand can be supplied only for some of the agents (not all of them) who bid the same price equal to the cleared day-ahead price and the market is unable to select which of these agents can trade on the day-ahead market or not. Therefore, in the following assumption, we assume that the congestion occurring is uncertain for all the agents, then we will show that this assumption leads to strictly increasing bid price function. Furthermore, according to Assumption~\ref{assumption1}, the probability density function $\eta(u)$ is positive if $u\in[u^\mathrm{min},u^\mathrm{max}]$; therefore, $U^\mathrm{c}=\phi^{-1}(c-d_0-D)$ is a continuous function, then  $\hat{\pi}^{\mathrm{d}}(u|u^\mathrm{d})$ is continuous and differentiable.
\end{remark}

Now, we consider the following assumption on congestion occurring.  
\begin{assumption}\textbf{(Uncertain congestion occurring).} \label{assumption2} 
      For all agents $i\in\mathcal{N}$, the congestion occurring is uncertain, \textit{i.e.}, $\mathrm{Pr}(c)\neq 1$. 
\end{assumption}

Then, in the following theorem, we investigate existence and uniqueness of the solution characterized by $u^\mathrm{d}$. 

\begin{theorem}\textbf{{(Existence and uniqueness of the solution characterized by $u^\mathrm{d}$).}} \label{theorem4}
	Let Assumptions~\ref{assumption1} and \ref{assumption2} hold  and  assume that there exists a Nash equilibrium for this game. 
	Then, in the Nash equilibrium, there exist a unique solution characterized by $u^\mathrm{d}$, defined by
	\begin{itemize}
		\item [(i)] If the supply function is below the bid price function at $u^\mathrm{min}$, \textit{i.e.}, 
		$f^{\mathrm{d}}(d_0 + \phi(u^\mathrm{min}))<\hat{\pi}^{\mathrm{d}}(u^\mathrm{min}|u^\mathrm{d})$, then the unique solution is defined by $u^\mathrm{d}=0$ 
  and all agents trade on the day-ahead market;
		\item[(ii)] if the supply function is above the bid price function at $u^\mathrm{max}$, \textit{i.e.}, 
		$f^{\mathrm{d}}(d_0 + \phi(u^\mathrm{max}))>\hat{\pi}^{\mathrm{d}}(u^\mathrm{max}|u^\mathrm{d})$,
		 then the unique solution is defined by 
   $u^\mathrm{d}=u^\mathrm{max}$ 
   and no agent trades on the day-ahead market;
		\item[(iii)] otherwise; the unique solution $u^\mathrm{d}$ is obtained from the intersection of the supply and bid price functions, \textit{i.e.},   $f^{\mathrm{d}}(d_0 + \phi(u^\mathrm{d}))=\hat{\pi}^{\mathrm{d}}(u^\mathrm{d}|u^\mathrm{d})$ or equivalently
			\begin{equation}\label{eq33}
			f^\mathrm{d}(d_0+\phi(u^\mathrm{d}))=u^\mathrm{d} + \mathbb{E}_{D}\big[[\phi^{-1}(c-d_0-D)-u^\mathrm{d}]^{+}\big].
		\end{equation}
		\end{itemize}
\end{theorem}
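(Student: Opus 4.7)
The plan is to recast the Nash equilibrium condition as a scalar equation in the single threshold $u^\mathrm{d}$ and then apply the intermediate value theorem together with strict monotonicity. Define, for $u \in [u^\mathrm{min}, u^\mathrm{max}]$,
\[
L(u) := f^{\mathrm{d}}(d_0 + \phi(u)), \qquad R(u) := \hat{\pi}^{\mathrm{d}}(u|u) = u + \mathbb{E}_{D}\bigl[[\phi^{-1}(c-d_0-D) - u]^{+}\bigr],
\]
so that (iii) is exactly $L(u^\mathrm{d}) = R(u^\mathrm{d})$: the left-hand side is the clearing price from Lemma~\ref{lemma10}, and the right-hand side is the self-consistent bid price from Corollary~\ref{corollary0}, where I have used that events with $U^\mathrm{c} \leq u$ contribute $u$ and so cancel inside $[\cdot]^{+}$.

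Next I would verify continuity and monotonicity. Under Assumption~\ref{assumption1}, $\phi$ is continuous and strictly decreasing on $[u^\mathrm{min},u^\mathrm{max}]$, $\phi^{-1}$ is continuous (see Remark~\ref{remark1}), and $f^{\mathrm{d}}$ is continuous and monotone increasing; hence $L$ is continuous and non-increasing. For $R$, continuity follows from dominated convergence applied to the uniformly bounded integrand, and a direct derivative computation gives $R'(u) = 1 - \Pr(\phi^{-1}(c-d_0-D) > u)$, which is non-negative everywhere and, by Assumption~\ref{assumption2}, strictly positive for all $u \in [u^\mathrm{min},u^\mathrm{max}]$: the event $\{U^\mathrm{c} = u^\mathrm{min}\}$ (no congestion) has positive probability, so $\Pr(U^\mathrm{c} > u) < 1$. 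Therefore $\Delta(u) := L(u) - R(u)$ is continuous and strictly decreasing.

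With this in hand the three cases follow by inspecting $\Delta$ at the endpoints. In case (i), $L(u^\mathrm{min}) < R(u^\mathrm{min})$ forces $\Delta(u) < 0$ everywhere, so by Lemma~\ref{lemma3} every agent has $g(u_i) > 0$ and trades on the day-ahead market; the convention $u^\mathrm{d}=0$ simply records that no utility threshold excludes anyone. In case (ii), $L(u^\mathrm{max}) > R(u^\mathrm{max})$ forces $\Delta(u) > 0$ everywhere, so no agent's bid clears and $u^\mathrm{d} = u^\mathrm{max}$. In the intermediate case (iii), $\Delta(u^\mathrm{min}) \geq 0 \geq \Delta(u^\mathrm{max})$, so the intermediate value theorem produces a root $u^\mathrm{d} \in [u^\mathrm{min}, u^\mathrm{max}]$, and strict monotonicity of $\Delta$ makes this root unique, which gives \eqref{eq33}.

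The main obstacle is the strict monotonicity of $R$: without Assumption~\ref{assumption2} the derivative $R'$ could vanish on an interval (a flat segment of the bid price function), and Remark~\ref{remark1} already flagged that such flat segments break well-posedness of market clearing. The argument above shows that Assumption~\ref{assumption2} is exactly what is needed to promote the merely non-decreasing $R$ to a strictly increasing $R$ on the entire utility range, which is what makes the intersection with $L$ unique rather than an interval. A secondary care point is making sure that in cases (i) and (ii) the boundary conventions $u^\mathrm{d} = 0$ and $u^\mathrm{d} = u^\mathrm{max}$ are consistent with the characterisation in Theorem~\ref{theorem2}, namely $u^\mathrm{d} \in [\ubar{u},\bar{u}]$; this amounts to checking that $g(u) > 0$ on $[u^\mathrm{min}, u^\mathrm{max}]$ in case (i) and $g(u) < 0$ on $[u^\mathrm{min}, u^\mathrm{max}]$ in case (ii), both of which follow immediately from the sign of $\Delta$.
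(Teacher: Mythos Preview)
Your proposal is correct and follows essentially the same approach as the paper: both reduce the equilibrium condition to a scalar equation, compute the derivative of the diagonal bid price $\hat{\pi}^{\mathrm{d}}(u|u)$ as $1-\Pr(U^{\mathrm{c}}>u)$, invoke Assumption~\ref{assumption2} to make it strictly positive, and then read off the three cases from the resulting strict monotonicity. Your $L/R/\Delta$ plus intermediate-value-theorem framing is slightly more compact than the paper's version, which first proves strict monotonicity of $\hat{\pi}^{\mathrm{d}}(u|u^{\mathrm{d}})$ in $u$ and then separately of $\hat{\pi}^{\mathrm{d}}(u^{\mathrm{d}}|u^{\mathrm{d}})$ in $u^{\mathrm{d}}$ and derives~\eqref{eq33} via a change of variables through the density $f_{u^{\mathrm{c}}}$, but the substance is the same.
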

\begin{proof}
According to Theorem~\ref{theorem2}, in the Nash equilibrium, if $u_i< u^\mathrm{d}$, agent $i\in\mathcal{N}$ does not trade on the day-ahead market, while if $u^\mathrm{d}< u_i$, agent $i$ trades on the day-ahead with its maximum energy consumption.  
Following Lemma~\ref{lemma10}, in the Nash equilibrium, the day-ahead clearing price is given in \eqref{eq21}, \textit{i.e.,} $\pi^\mathrm{d}=f^{\mathrm{d}}(d_0 + \phi(u^\mathrm{d}))$. Then, according to Theorems~\ref{theorem1}, \ref{theorem2} and Lemma~\ref{lemma10}, agent $i$ does not trade on the day-ahead market 
 if and only if its bid price is less than the day-ahead clearing price 
  while agent $i$  trades on the day-ahead market 
if and only if its bid price is greater than the day-ahead clearing price, 
\textit{i.e.},
\begin{align}\label{iff0}
	u_i < & u^\mathrm{d} \Longleftrightarrow \pi^{\mathrm{d}}=f^{\mathrm{d}}(d_0 + \phi(u^\mathrm{d}))>\hat{\pi}^{\mathrm{d}}(u_i|u^\mathrm{d}) \\ \label{iff1}
	u_i > & u^\mathrm{d} \Longleftrightarrow	\pi^{\mathrm{d}}=f^{\mathrm{d}}(d_0 + \phi(u^\mathrm{d}))<\hat{\pi}^{\mathrm{d}}(u_i|u^\mathrm{d}).
\end{align}
According to Theorem~\ref{theorem2}, Lemmas \ref{lemma13}-\ref{lemma10} and Corollary~\ref{corollary0}, we have $u^\mathrm{d} \in [\ubar{u},\bar{u}]$, where $\ubar{u}:=\arg\inf_{u} \hat{\pi}^{\mathrm{d}}(u|u^\mathrm{d}) - f^{\mathrm{d}}(d_0 + \phi(u))\geq 0$
and $\bar{u} =  \arg\inf_{u} \hat{\pi}^{\mathrm{d}}(u|u^\mathrm{d}) - f^{\mathrm{d}}(d_0 + \phi(u))>0$.  
Now, we show that under Assumptions~\ref{assumption1} and \ref{assumption2}, the bid price function $\hat{\pi}^{\mathrm{d}}(u|u^\mathrm{d})$ is strictly increasing for all $u\in[u^\mathrm{min},u^\mathrm{max}]$. Also, since the probability density function $\eta(u)$ is positive if $u\in[u^\mathrm{min},u^\mathrm{max}]$, $U^\mathrm{c}=\phi^{-1}(c-d_0-D)$ is a continuous function, then  $\hat{\pi}^{\mathrm{d}}(u|u^\mathrm{d})$ is continuous and differentiable. Let $f_{u^\mathrm{c}}(u)$ be the probability density function of $U^\mathrm{c}$ given by \eqref{eq28}, then following Corollary~\ref{corollary0}, the bid price function can be given by
\begin{align}\nonumber
		\hat{\pi}^\mathrm{d}(u|u^\mathrm{d})  = & \mathbb{E}_{\Pi^\mathrm{r}}[\max(u,\Pi^\mathrm{r})]\\ \nonumber
  = & u \int_{-\infty}^{\max(u,u^\mathrm{d})}f_{u^\mathrm{c}}(u') du' \\ \label{BidPrice2}
  & + \int_{\max(u,u^\mathrm{d})}^{\infty} u' f_{u^\mathrm{c}}(u') du' 
	\end{align}
Then, we have the following cases:
\begin{itemize}
\item[(i)] If $u>u^{\mathrm{d}}$. Then, we have 
\begin{align}\nonumber
 \hat{\pi}^\mathrm{d}(u|u^\mathrm{d}) 
  = & u \int_{-\infty}^{u}f_{u^\mathrm{c}}(u') du' \\ \label{BidPrice3}
  & + \int_{u}^{\infty} u' f_{u^\mathrm{c}}(u') du'
\end{align}
The derivative of \eqref{BidPrice3} along $u$ is given by
\begin{align}\nonumber
\dfrac{\partial \hat{\pi}^\mathrm{d}(u|u^\mathrm{d})}{\partial u}
  = & \int_{-\infty}^{u}f_{u^\mathrm{c}}(u') du' \\ \nonumber
  = & 
  1- \int_{u}^{\infty}f_{u^\mathrm{c}}(u') du' \\ \label{BidPrice4}
  = & 1 - \mathrm{Pr}(U^\mathrm{c}>u)
\end{align}
Therefore, according to Assumption~\ref{assumption2} and due to the fact that $u>u^{\mathrm{d}}$, $\mathrm{Pr}(U^\mathrm{c}>u) \leq \mathrm{Pr}(U^\mathrm{c}>u^\mathrm{d}) =  \mathrm{Pr}(c)\neq 1$, then $\frac{\partial \hat{\pi}^\mathrm{d}(u|u^\mathrm{d})}{\partial u}>0$.  
\item[(ii)] If $u\leq u^{\mathrm{d}}$. Then, we have
\begin{align}\nonumber
 \hat{\pi}^\mathrm{d}(u|u^\mathrm{d})
  = & u \int_{-\infty}^{u^\mathrm{d}}f_{u^\mathrm{c}}(u') du' \\ \label{BidPrice5}
  & + \int_{u^\mathrm{d}}^{\infty} u' f_{u^\mathrm{c}}(u') du'
\end{align}
The derivative of \eqref{BidPrice5} along $u$ is given by
\begin{align}\nonumber
\dfrac{\partial \hat{\pi}^\mathrm{d}(u|u^\mathrm{d})}{\partial u}
  = & \int_{-\infty}^{u^\mathrm{d}}f_{u^\mathrm{c}}(u') du' \\ \nonumber
  = & 
  1- \int_{u^\mathrm{d}}^{\infty}f_{u^\mathrm{c}}(u') du' \\ \nonumber
  = & 1 - \mathrm{Pr}(U^\mathrm{c}>u^\mathrm{d}) \\ \label{BidPrice4}
  = &  1 - \mathrm{Pr}(c)
\end{align}
Therefore, according to Assumption~\ref{assumption2}, $\mathrm{Pr}(c)\neq 1$, then $\frac{\partial \hat{\pi}^\mathrm{d}(u|u^\mathrm{d})}{\partial u}>0$. 
\end{itemize}
Consequently, we can conclude from items~(i) and (ii) that the bid price function $\hat{\pi}^{\mathrm{d}}(u|u^\mathrm{d})$ is strictly increasing for all $u\in[u^\mathrm{min},u^\mathrm{max}]$. 
Furthermore, under 
Assumption~\ref{assumption1}, the supply function $f^{\mathrm{d}}(\cdot)$ is continuous and monotone  increasing and the function $\phi(\cdot)$ is continuous and strictly decreasing for all $u\in[u^\mathrm{min},u^\mathrm{max}]$.  
Then, according to Lemma~\ref{lemma3} and Assumptions~\ref{assumption1} and \ref{assumption2}, the mapping $g(u)=\hat{\pi}^{\mathrm{d}}(u|u^\mathrm{d}) - f^{\mathrm{d}}(d_0 + \phi(u))$  is  strictly increasing for all $u\in[u^\mathrm{min},u^\mathrm{max}]$. 
Consequently, if $f^{\mathrm{d}}(d_0 + \phi(u^\mathrm{min}))<\hat{\pi}^{\mathrm{d}}(u^\mathrm{min}|u^\mathrm{d})$, we obtain 
$u^\mathrm{d}=0$ since $u\geq 0$ and $\eta(u)=0$ for all $u<u^\mathrm{min}$. 
Then, according to \eqref{iff1},  
and due to the fact that $u^\mathrm{min}\geq 0$,  
all agents trade on the day-ahead market. If $f^{\mathrm{d}}(d_0 + \phi(u^\mathrm{max}))>\hat{\pi}^{\mathrm{d}}(u^\mathrm{max}|u^\mathrm{d})$, we obtain  
$u^\mathrm{d}=u^\mathrm{max}$ since  $\eta(u)=0$ for all $u>u^\mathrm{max}$. 
Then, according to \eqref{iff0} no agent trades on the day-ahead market.  
Otherwise, the supply and bid price functions has an intersection point. Now, according to \eqref{BidPrice2}, the bid price function at $u^\mathrm{d}$ can be expressed as 
\begin{align}\nonumber
 \hat{\pi}^\mathrm{d}(u^\mathrm{d}|u^\mathrm{d}) 
  = & u^\mathrm{d} \int_{-\infty}^{u^\mathrm{d}}f_{u^\mathrm{c}}(u') du' \\ \label{BidPrice7}
  & + \int_{u^\mathrm{d}}^{\infty} u' f_{u^\mathrm{c}}(u') du'.
\end{align}
The derivative of \eqref{BidPrice7} along $u^{\mathrm{d}}$ is given by
\begin{align}\nonumber
\dfrac{\partial \hat{\pi}^\mathrm{d}(u^\mathrm{d}|u^\mathrm{d})}{\partial u^\mathrm{d}}
  = & \int_{-\infty}^{u^\mathrm{d}}f_{u^\mathrm{c}}(u') du' \\ \nonumber
  = & 
  1- \int_{u^\mathrm{d}}^{\infty}f_{u^\mathrm{c}}(u') du' \\ \nonumber
  = & 1 - \mathrm{Pr}(U^\mathrm{c}>u^\mathrm{d}) \\ \label{BidPrice9}
  = & 1 - \mathrm{Pr}(c)
\end{align}
Therefore, according to Assumption~\ref{assumption2}, $\mathrm{Pr}(c)\neq 1$, then $\hat{\pi}^{\mathrm{d}}(u^\mathrm{d}|u^\mathrm{d})$  is strictly increasing for all $u^\mathrm{d}\in[u^\mathrm{min},u^\mathrm{max}]$. Hence, the uniqueness of the solution can be concluded from the strictly increasing property of  the mapping $g(u^\mathrm{d})=\hat{\pi}^{\mathrm{d}}(u^\mathrm{d}|u^\mathrm{d}) - f^{\mathrm{d}}(d_0 + \phi(u^\mathrm{d}))$. 

Moreover, following \eqref{eq28}, we obtain 
\begin{align} \nonumber
    f_{u^\mathrm{c}}(u) = & \rho_{D}(c-d_0-\phi(u)) \dfrac{\partial D}{\partial u} \\ \label{density}
    = & \rho_{D}(c-d_0-\phi(u)) \eta(u)
\end{align}
Then, by applying \eqref{density} to \eqref{BidPrice7}, we have 
\begin{align}\nonumber
 \hat{\pi}^\mathrm{d}(u^\mathrm{d}|u^\mathrm{d})
  = & u^\mathrm{d} \int_{-\infty}^{u^\mathrm{d}}\rho_{D}(c-d_0-\phi(u')) \eta(u') du' \\ \nonumber
  & + \int_{u^\mathrm{d}}^{\infty} u' \rho_{D}(c-d_0-\phi(u')) \eta(u') du' \\ \nonumber
  = & u^\mathrm{d} +\int_{u^\mathrm{d}}^{\infty}  \rho_{D}(c-d_0-\phi(u')) \eta(u')(u'\\ \nonumber
  & -u^{\mathrm{d}}) du' \\ \nonumber
  = & u^\mathrm{d} + \int_{c-d_0-\phi(u^\mathrm{d})}^{\infty}  \rho_{D}(\delta)\big(\phi^{-1}(c-d_0-\delta) \\ \nonumber 
  & -u^{\mathrm{d}}\big)d\delta \\ \label{BidPrice8}
  = & u^\mathrm{d} + \mathbb{E}_{D}\big[[\phi^{-1}(c-d_0-D)-u^\mathrm{d}]^{+}\big]. 
\end{align}
Thus,  $u^\mathrm{d}$ is obtained from solving \eqref{eq33}. 
\end{proof}

\begin{theorem}\textbf{{(Existence and uniqueness of a Nash equilibrium).}} \label{theorem5}
Let Assumptions~\ref{assumption1} and \ref{assumption2} hold. 
Then, the agent strategies are determined by $u^\mathrm{d}$ and $U^\mathrm{r}$ such that there exists a unique Nash equilibrium for this game.
\end{theorem}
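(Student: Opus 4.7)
The plan is to construct a candidate strategy profile from the unique $u^\mathrm{d}$ guaranteed by Theorem~\ref{theorem4} together with the $U^\mathrm{r}$ determined by Lemma~\ref{lemma13}, verify that it satisfies the Nash property, and then conclude uniqueness directly from Theorems~\ref{theorem2} and \ref{theorem4} together with Lemma~\ref{lemma13}.

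For existence, I would fix $u^\mathrm{d}$ as in Theorem~\ref{theorem4} and assign to each agent $i\in\mathcal{N}$ the day-ahead bid $(\hat{e}_i^\mathrm{d},\hat{\pi}_i^\mathrm{d})=(e_i^\mathrm{max},\mathbb{E}_{\Pi^\mathrm{r}}[\max(u_i,\Pi^\mathrm{r})])$ of Theorem~\ref{theorem1} together with the second-stage response $e_i^{\mathrm{r}\ast}$ of Lemma~\ref{lemma1}. Under this profile the aggregate day-ahead flexible demand equals $\phi(u^\mathrm{d})$, so by Lemma~\ref{lemma10} the clearing price is $\pi^\mathrm{d}=f^\mathrm{d}(d_0+\phi(u^\mathrm{d}))$, and by Lemmas~\ref{lemma13}--\ref{lemma11} the redispatch price obeys \eqref{pir} with $U^\mathrm{c}=\phi^{-1}(c-d_0-D)$. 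The defining equation \eqref{eq33} of Theorem~\ref{theorem4} is precisely the market-clearing consistency $f^\mathrm{d}(d_0+\phi(u^\mathrm{d}))=\hat{\pi}^\mathrm{d}(u^\mathrm{d}|u^\mathrm{d})$, so the induced partition of agents into those with $u_i<u^\mathrm{d}$ (not trading) and those with $u_i>u^\mathrm{d}$ (trading at maximum consumption) is self-consistent. To verify the Nash condition I would observe that in the mean-field limit a unilateral deviation of any single agent leaves both $\pi^\mathrm{d}$ and the distribution of $\Pi^\mathrm{r}$ unchanged, so each agent faces a scalar optimization at fixed prices; Lemma~\ref{lemma1} and Theorem~\ref{theorem1} then identify the prescribed strategy as the unique optimal response, so no agent can strictly improve.

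For uniqueness, I would take an arbitrary Nash equilibrium and apply Theorem~\ref{theorem2}, which forces the agent strategies to be ordered along $u$ with thresholds $u^\mathrm{d}\leq U^\mathrm{r}$. Theorem~\ref{theorem4} (applicable because Assumptions~\ref{assumption1} and \ref{assumption2} hold) yields a unique threshold $u^\mathrm{d}$, and Lemma~\ref{lemma13} pins $U^\mathrm{r}$ down as a function of $u^\mathrm{d}$ and the realization of $D$ via \eqref{eq28}. Thus the pair $(u^\mathrm{d},U^\mathrm{r})$ is uniquely determined, and through Theorem~\ref{theorem2} the full strategy profile is too, up to the measure-zero set of indifferent agents whose individual choices do not affect aggregate outcomes.

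The main obstacle will be the best-response verification in the existence step. Theorems~\ref{theorem2}--\ref{theorem4} and the supporting lemmas all assumed the existence of a Nash equilibrium and then derived necessary conditions on it; here the burden is reversed, and I must check that fixing prices at the candidate's clearing values actually produces these strategies as best responses simultaneously for all agents and in both stages. The linchpin is that \eqref{eq33} is precisely the fixed-point equation relating $\hat{\pi}^\mathrm{d}(\cdot|u^\mathrm{d})$ to $f^\mathrm{d}(d_0+\phi(\cdot))$ at the threshold, and the strict monotonicity of the mapping $u\mapsto \hat{\pi}^\mathrm{d}(u|u^\mathrm{d})-f^\mathrm{d}(d_0+\phi(u))$ established inside Theorem~\ref{theorem4} guarantees that this fixed point exists and is consistent with every agent's individually optimal response.
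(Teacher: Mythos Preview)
Your proposal is correct and follows essentially the same route as the paper: construct the candidate profile from the threshold $u^\mathrm{d}$ (via Theorem~\ref{theorem4}) and $U^\mathrm{r}$ (via Lemma~\ref{lemma13}), verify the best-response property through Lemmas~\ref{lemma1}, \ref{lemma3}, \ref{lemma10}, \ref{lemma11} and Theorem~\ref{theorem1}, and then deduce uniqueness from Theorem~\ref{theorem4}. Your version is, if anything, slightly cleaner: you make the mean-field rationale explicit (a single agent's deviation leaves $\pi^\mathrm{d}$ and the law of $\Pi^\mathrm{r}$ unchanged), whereas the paper leaves this implicit, and you start directly from the $u^\mathrm{d}$ of Theorem~\ref{theorem4} rather than from thresholds ``satisfying the conditions of Theorem~\ref{theorem2} regardless of existence,'' which in the paper reads somewhat circularly since Theorem~\ref{theorem2} is itself stated conditionally on existence.
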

\begin{proof}
Let $\tilde{u}^\mathrm{d}$ and $\tilde{U}^\mathrm{r}$ satisfy the conditions of Theorem~\ref{theorem2}, 
regardless of the existence of a Nash equilibrium for this game. 
 Then, we want to show that if the agents trade on the day-ahead and redispatch market as described in Theorem~\ref{theorem2} with the redispatch price given in Lemma~\ref{lemma11}, 
 the agent strategies are the Nash equilibrium. Thus, 
 let $\tilde{\Pi}^\mathrm{r}$ be given by \eqref{pir} with $\tilde{U}^\mathrm{r}$ and the agents trade on the day-ahead market as described in Theorem~\ref{theorem2} with $\tilde{u}^\mathrm{d}$ and $\tilde{U}^\mathrm{r}$, regardless of the existence of a Nash equilibrium for this game. Indeed, 
 for each realization of $\tilde{u}^\mathrm{r}\leftarrow \tilde{U}^\mathrm{r}$ and $\tilde{\pi}^\mathrm{r}\leftarrow \tilde{\Pi}^\mathrm{r}$,
 let the agents with  $u_i< \tilde{u}^\mathrm{d}$  not trade on the day-ahead market, with $\tilde{u}^\mathrm{d}< u_i<\tilde{u}^{\mathrm{r}}$ trade on both day-ahead and redispatch market with their maximum energy consumption and with $\tilde{u}^\mathrm{r}< u_i$ only trade on the day-ahead market with their maximum energy consumption. 
 Then, following Lemma~\ref{lemma1}, the second stage strategy is the optimal strategy \eqref{eq2} with $\tilde{\pi}^\mathrm{r}$. 
  Since only the agents with $\tilde{u}^\mathrm{d}< u_i$  trades on the day-ahead market, the day-ahead clearing price is given in \eqref{eq21} with $\tilde{u}^\mathrm{d}$, \textit{i.e.}, $\tilde{\pi}^\mathrm{d}=f^{\mathrm{d}}(d_0 + \phi(\tilde{u}^\mathrm{d}))$.  
Then, following Lemma~\ref{lemma3}, the day-ahead schedule of agent $i\in\mathcal{N}$ is the optimal schedule \eqref{eq8} with  $\tilde{\pi}^\mathrm{d}$ and $\tilde{\Pi}^\mathrm{r}$; thus, it is a Nash equilibrium for this game.  
Hence, following Theorem~\ref{theorem1}, 
 if all agents bid \eqref{eq14}, then there exists a Nash equilibrium for this game such that the agent strategies are determined by $u^\mathrm{d}=\tilde{u}^\mathrm{d}$ and $U^\mathrm{r}=\tilde{U}^\mathrm{r}$. According to Theorem~\ref{theorem4}, under Assumptions~\ref{assumption1} and \ref{assumption2}, in the Nash equilibrium,  there exist a unique solution characterized by $u^\mathrm{d}$. Therefore, we can conclude the uniqueness of the Nash equilibrium.  
\end{proof}

\subsection{Fixed day-ahead price and certain congestion occurring} \label{subsec:D}
In some applications, the day-ahead price  is considered fixed. Therefore,  
the supply function $f^{\mathrm{d}}(\cdot)$ is not
strictly increasing. In order to investigate the fixed day-ahead price, we consider the case that the supply curve can be flat for some intervals and the probability density function $\eta(u)$ is non-negative if $u\in[u^\mathrm{min},u^\mathrm{max}]$. Moreover, we consider that the congestion occurring can be certain, \textit{i.e.}, Assumption~\ref{assumption2} does not hold, then the bid price function $\hat{\pi}^{\mathrm{d}}(u|u^{\mathrm{d}})$ is monotone increasing and can be flat for some intervals of $u\in[u^\mathrm{min},u^\mathrm{max}]$. Indeed, if we consider the fixed day-ahead price, the problem discussed in Remark~\ref{remark1} does not occur (\textit{i.e.}, market does not fail) and we do not need to consider Assumption~\ref{assumption2}. 
 Under this condition, Theorem~\ref{theorem4}~(i) and (ii) hold as before but we should investigate Theorem~\ref{theorem4}~(iii), \textit{i.e.}, when the supply and bid price functions have at least one intersection point. Using the similar analysis in the proof of Theorem~\ref{theorem4}, the existence of the solution to \eqref{eq33} can be shown. However, the uniqueness of the solution to \eqref{eq33} does not necessarily hold. In order to show this,
 consider the pessimistic solution $\ubar{u}:=\arg\inf_{u} \hat{\pi}^{\mathrm{d}}(u|u^{\mathrm{d}}) - f^{\mathrm{d}}(d_0 + \phi(u)) \geq 0$ 
 and optimistic solution $\bar{u}:=\arg\inf_{u} \hat{\pi}^{\mathrm{d}}(u|u^{\mathrm{d}}) - f^{\mathrm{d}}(d_0 + \phi(u))>0$. 
 According to the proofs of Lemma~\ref{lemma3} and Theorem~\ref{theorem4}, the 
 function $\hat{\pi}^{\mathrm{d}}(u^{\mathrm{d}}|u^{\mathrm{d}})$ is monotone increasing and $\frac{\partial f^\mathrm{d}(d_0+\phi(u^{\mathrm{d}}))}{\partial u^{\mathrm{d}}} = -f^{\mathrm{d}^\prime}(d_0+\phi(u^{\mathrm{d}}))\eta(u^{\mathrm{d}})$. Then, if $u^\mathrm{d}\in[\ubar{u},\bar{u}]$ and $\bar{u}\neq\ubar{u}$, we have 
$f^{\mathrm{d}^\prime}(d_0+\phi(u^{\mathrm{d}}))\eta(u^{\mathrm{d}})=0$ for this interval. Thus, since $f^{\mathrm{d}^\prime}(d_0+\phi(u^{\mathrm{d}}))$ is non-negative (\textit{i.e.} it can be equal to zero), $\eta(u^{\mathrm{d}})$ is not necessarily equal to zero (\textit{i.e.}, there can exist some agents) in this interval. Indeed, for each $u^\mathrm{d}\in[\ubar{u},\bar{u}]$, the number of agents which trade on the day-ahead market can be different. Consequently, if we consider the fixed day-ahead price and possible certain congestion occurring, in the Nash equilibrium, there exist a solution characterized by $u^\mathrm{d}$ which is not necessarily unique.

 Moreover, following Theorem~\ref{theorem5},  we can show that the Nash equilibrium for this game exists but it is not necessarily unique.  In order to show this, let $\tilde{u}^\mathrm{d}$ and $\tilde{U}^\mathrm{r}$ satisfy the conditions of Theorem~\ref{theorem2}, 
 regardless of the existence of a Nash equilibrium for this game. Since the day-ahead price is fixed and  congestion occurring can be certain, 
 the solution $\tilde{u}^\mathrm{d}$ to \eqref{eq33} exists but it is not  necessarily unique and can be multiple. Also, let the agents trade on the day-ahead and redispatch market according to Theorem~\ref{theorem2} with $\tilde{u}^\mathrm{d}$ and $\tilde{U}^\mathrm{r}$, regardless of the existence of a Nash equilibrium for this game. Therefore, the agents with  $u_i\leq \tilde{u}^\mathrm{d}$ do not trade on the day-ahead market while  with $\tilde{u}^\mathrm{d}\leq u_i$ trade and their number varies for different solutions $\tilde{u}^\mathrm{d}$. Then, using the analogous analysis in the proof of Theorem~\ref{theorem5}, we can show that for each solution 	$\tilde{u}^\mathrm{d}$, there exists a Nash equilibrium for this game which can be different.
 
 We note that the agents with utilities inside the interval $[\ubar{u},\bar{u}]$ have the zero expected welfare and are indifferent to trade on the day-ahead market or not; however, these agents can affect 
 the DSO costs in the redispatch market. More precisely, for the optimistic solution $u^\mathrm{d}=\bar{u}$, the agents with utilities inside the interval $[\ubar{u},\bar{u}]$ do not trade on the day-ahead market while for the pessimistic solution $u^\mathrm{d}=\ubar{u}$, all of these agents trade. 
 Therefore, 
  trading more agents in the day-ahead market leads to 
   worse congestion and higher DSO costs for redispatching.
 
\section{Simulation Results}\label{sec:sim}
\begin{figure}[t]
	\centering
	\includegraphics[width=\linewidth]{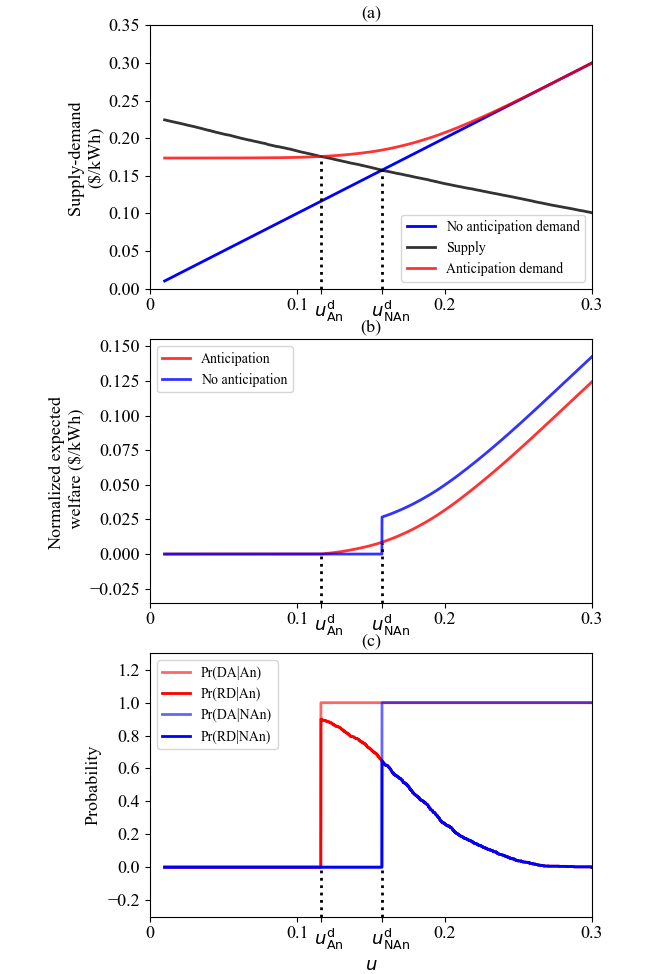}
	\caption{Day-ahead market 
 with medium network capacity and $D\neq 0$: (a) supply and demand curves; (b) normalized expected welfare when the energy consumers can and cannot anticipate the redispatch market; and (c) probabilities of trading on day-ahead and redispatch market.}
	\label{fig2}
\end{figure}

\begin{figure}[t]
	\centering
	\includegraphics[width=\linewidth]{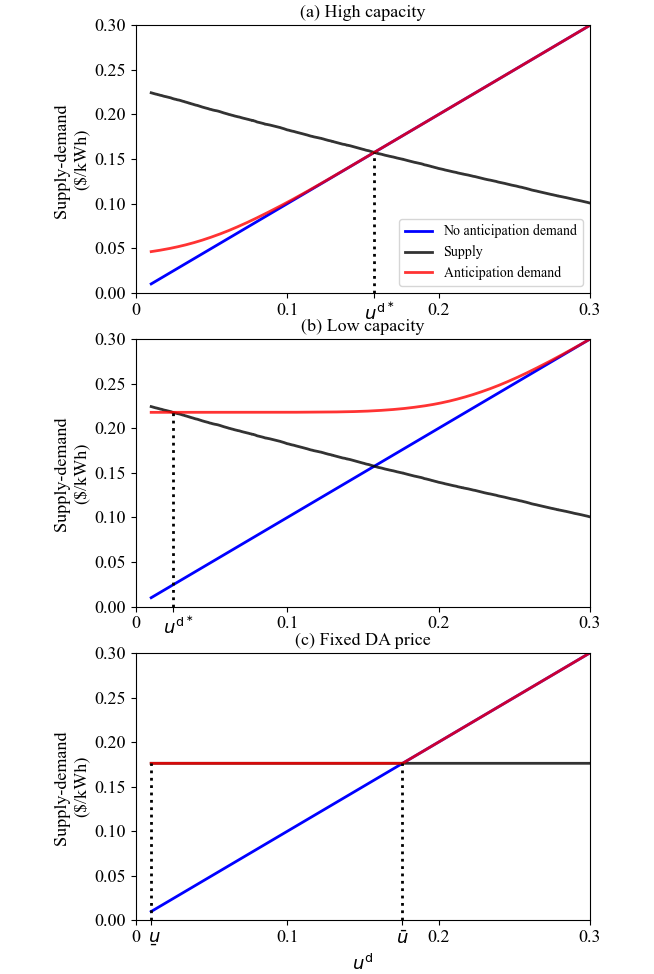}
	\caption{Supply and demand curves:  
 (a) high network capacity; (b) low network capacity; 
 and (c) fixed day-ahead price and no uncertainty.}
	\label{fig3}
\end{figure}

\begin{figure}[t]
	\centering
	\includegraphics[width=\linewidth]{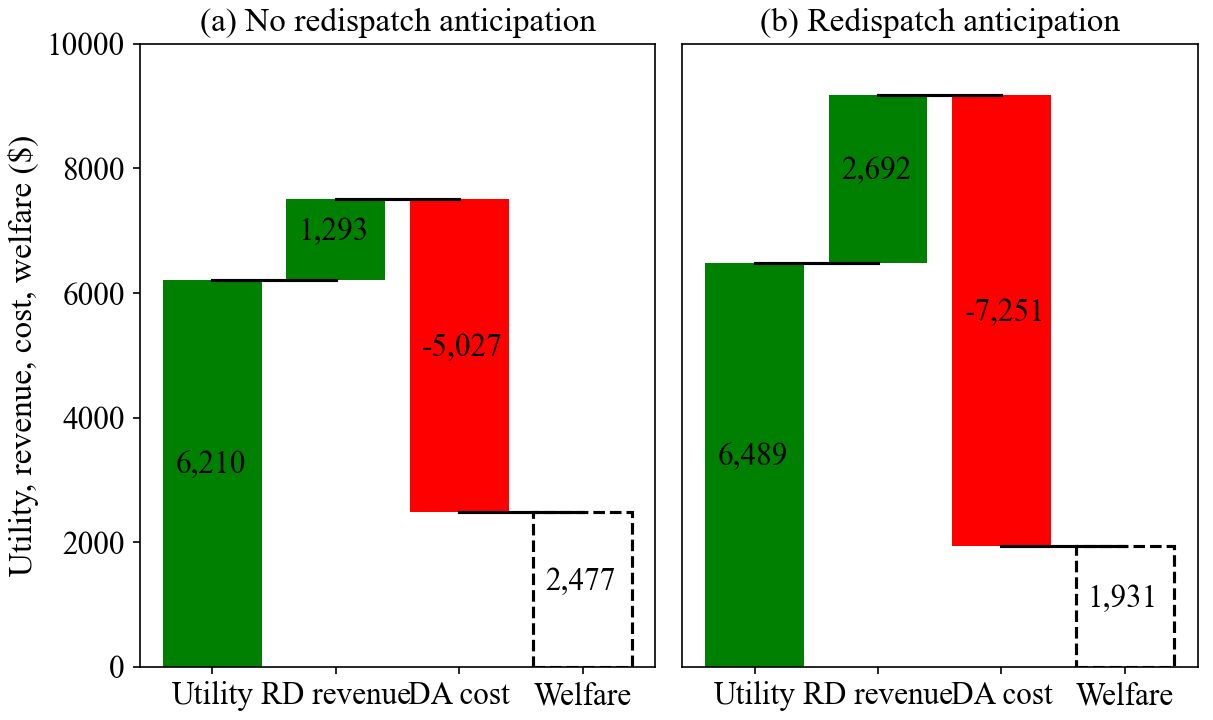}
	\caption{The total expected welfare, utility, redispatch revenue and day-ahead cost at time $t = 1$ with medium network capacity and $D\neq 0$: (a) energy consumers
		cannot anticipate the redispatch market and (b) energy consumers can anticipate the redispatch market.}
	\label{fig10}
\end{figure}

\begin{figure}[t]
	\centering
	\includegraphics[width=\linewidth]{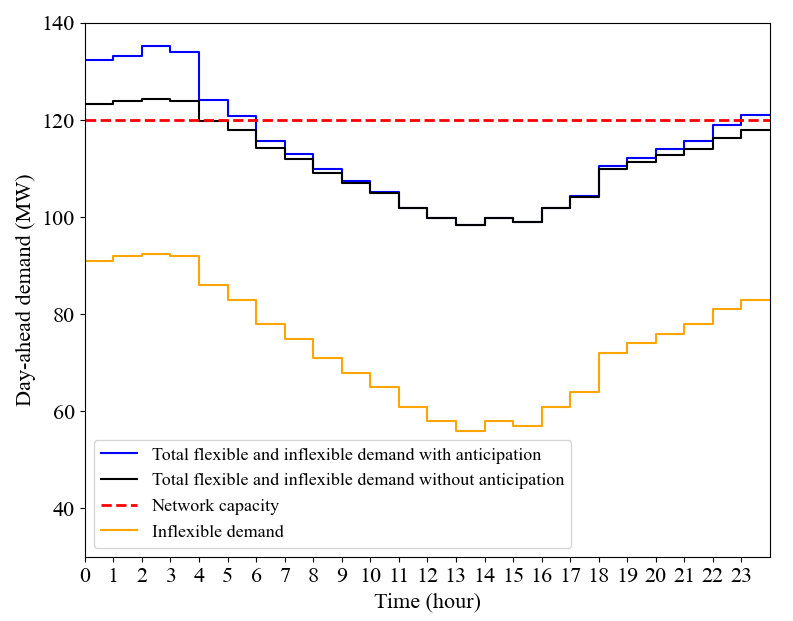}
	\caption{The total normalized expected demand in the day-ahead market when energy consumers can and cannot anticipate the redispatch market.}
	\label{fig4}
\end{figure}

\begin{table*}[t]
	\caption{\label{table1} The total expected welfare, utility, redispatch revenue and day-ahead cost.} 
	\centering
	\begin{tabular}{@{} cccccccc @{}}
 \toprule[0.02cm]
 \textbf{Redispatch anticipation}\\
		\midrule[0.02cm]
      & \textbf{Medium} & \textbf{High}  & \textbf{Low}      & \textbf{Fixed DA price} & \textbf{Fixed DA price} 
  \\	&\textbf{capacity} &\textbf{capacity}  &\textbf{capacity} &  \textbf{(pessimistic)} & \textbf{(optimistic)} \\
		\midrule[0.02cm]
		Utility (\$)~ &$6,489$ &$7,323$ &$4,600$	 		&$6,670$ &$6,670$ \\
		\midrule[0.01cm]
		\begin{tabular}[c]{@{}l@{}l@{}}Redispatch \\revenue (\$) \end{tabular} &$2,692$ &$1$ &$7,588$	 	 &$6,507$ &$0$ \\
		\midrule[0.01cm]
		\begin{tabular}[c]{@{}l@{}l@{}}Day-ahead~ \\ cost (\$)~ \end{tabular}       &$-7,251$ &$-5,029$ &$-11,212$	   &$-11,441$ &$-4,934$  \\
		\midrule[0.01cm]
		\begin{tabular}[c]{@{}l@{}l@{}}Welfare (\$) \end{tabular} &$1,931$ &$2,295$  &$976$   &$1,736$ &$1,736$ \\
  \bottomrule[0.02cm]
      \toprule[0.02cm]
\textbf{~~~No redispatch anticipation}
 \\
		\midrule[0.02cm]
		Utility (\$)~ &$6,210$ &$7,321$ &$4,542$ 		&$6,670$ &$6,670$ \\
		\midrule[0.01cm]
		\begin{tabular}[c]{@{}l@{}l@{}}Redispatch \\revenue (\$) \end{tabular} &$1,293$ &$1$ &$3,383$		 &$0$ &$0$ \\
		\midrule[0.01cm]
		\begin{tabular}[c]{@{}l@{}l@{}}Day-ahead~ \\ cost (\$)~ \end{tabular}       &$-5,027$  &$-5,027$ &$-5,027$   &$-4,934$ &$-4,934$  \\
		\midrule[0.01cm]
		\begin{tabular}[c]{@{}l@{}l@{}}Welfare (\$) \end{tabular} &$2,477$ &$2,295$  &$2,899$   &$1,736$ &$1,736$ \\
  	\bottomrule[0.02cm]
        \toprule[0.02cm]
	\begin{tabular}[c]{@{}l@{}l@{}}Welfare \\difference (\$) \end{tabular} &$-546$ &$0$ &$-1,923$		 &$0$  &$0$  \\
		\bottomrule[0.02cm]
 	\end{tabular}
\end{table*} 

In this section, the performance of the proposed method is verified by numerical simulation. 
We 
use the total inflexible load profile given in \cite{ma}.  The time horizon covers the 24-hour period and we use the supply  function $f^\mathrm{d}(x)=0.15(\frac{x}{1.2\times 10^5})^{1.5}~\$/\text{kWh}$ from \cite{ma,sergio,sergio2}. 
Moreover, the total number of energy consumers is considered $10^4$,  the utility 
and the maximum energy consumption of each energy consumer 
are chosen randomly from  the uniform distributions on the intervals $[0.01,0.3]~\$/\text{kWh}$ and $[3,10]~\text{kWh}$, respectively. We consider the medium network capacity $c=120~\text{MW}$ as the base scenario and also the low and high network capacities $c=110, 150~\text{MW}$, respectively. For the anticipated
forecast inflexible load demand $d_0$, we use the inflexible load profile given in \cite[Figure 1]{ma} and  the stochastic variable of forecast inflexible load demand $D$ is chosen randomly from  the normal distribution with the standard deviation $10~\text{MW}$. 

Now, we consider the two-stage problem in Fig.~\ref{fig1} for this simulation setup. Furthermore, we consider the case that the energy consumers cannot anticipate the redispatch market (naive scenario). Indeed, in this case, the energy consumers do not consider their expected revenues from the redispatch market when they bid in the day-ahead market (they do not bid strategically). However, for computing the expected welfare of the energy consumers, we consider their expected revenues from the redispatch market. Fig.~\ref{fig2} illustrates the supply and demand curves in the day-ahead market, normalized expected welfare  when the energy consumers can and cannot anticipate the redispatch market, and probabilities of trading on day-ahead and redispatch market  
with medium network capacity, respectively. We can observe from Fig.~\ref{fig2}~(b) that the normalized expected welfare of energy consumers, which do not trade on the day-ahead market when the energy consumers cannot anticipate the redispatch market while trade when they can anticipate the redispatch market, increases. However, trading more energy consumers on the day-ahead market results in a higher day-ahead clearing price and the welfare of some energy consumers, which trade on the day-ahead market either when the energy consumers can anticipate the redispatch market or not, decreases. The difference between the curves in Fig.~\ref{fig2}~(b) when the utility is greater than $u^\mathrm{d}_\mathrm{NAn}$, is almost $0.01815~\$/\mathrm{kWh}$. Moreover,  when the energy consumers can anticipate the redispatch market in comparison with the case that they cannot, the energy consumers with utilities inside the interval $[u^\mathrm{d}_\mathrm{An},u^\mathrm{d}_\mathrm{NAn}]$ gain $33 ~\$$ and the energy consumers with utilities greater than $u^\mathrm{d}_\mathrm{NAn}$ lose $579~\$$. We can notice from Fig.~\ref{fig2}~(c) that if the utility of each energy consumers is less than $u^\mathrm{d}_\mathrm{An}$ for the case that they can anticipate or $u^\mathrm{d}_\mathrm{NAn}$ for the case that they cannot anticipate, the probabilities of trading on day-ahead market and redispatch market, is zero; otherwise, they are greater than or equal to zero. Also, the probability of trading on the redispatch market decreases when the utilities increase. 
Fig.~\ref{fig3}  shows the supply and demand curves in the day-ahead market  
with high and low network capacities, medium network capacity with fixed day-ahead price and no uncertainty ($D=0$), respectively. We can note from Figures~\ref{fig3}~(a), (b) and \ref{fig2}~(a) that when the network capacity increases, the value of $u^\mathrm{d \ast}$ increases; therefore, fewer number of energy consumers trade on the day-ahead market and the day-ahead clearing price decreases. For the high network capacity, the probability of congestion occurring is low and the demand curve is very close to the identity line while for the low network capacity, the probability of congestion occurring is high and the demand curve is very close to a constant value.  
We can observe from Fig.~\ref{fig3}~(c) that when the day-ahead price is fixed, 
the pessimistic solution $\ubar{u}$ and optimistic solution $\bar{u}$
exist such that  $\bar{u}\neq\ubar{u}$.
We can note from Fig.~\ref{fig10} that the energy consumers make more redispatch revenue from the DSO when they can anticipate the redispatch market in comparison with the case that they cannot. However, their total welfare does not become necessary higher. This is because when the energy consumers can anticipate the redispatch market, more energy consumers trades on the day-ahead market which leads to a higher day-ahead clearing price in comparison with the case that they cannot anticipate. Indeed, although the welfare of some energy consumers increases when they can anticipate the redispatch market in comparison with the case that they cannot, the welfare of some other energy consumers decreases (see Fig.~\ref{fig2}~(b)). 

Table~\ref{table1} demonstrates the total expected welfare, utility, redispatch revenue and day-ahead cost of the energy consumers  
for different network capacities, fixed day-ahead price and for the cases that the energy consumers can and cannot anticipate the redispatch market.   
We can observe that the total redispatch revenue and day-ahead cost increase when the network capacity decrease (\textit{i.e.}, when the probability of occurring congestion becomes higher).  
Indeed, DSO should pay higher costs for redispatching the energy schedules when the energy consumers anticipate that the congestion problem will occur with a higher probability. 
Furthermore, we can see that for the optimistic solution $u^\mathrm{d}=\bar{u}$,  the energy consumers with utilities inside the interval $[\ubar{u},\bar{u}]$, so-called indifferent agents (see Subsection~\ref{subsec:D} and Fig.~\ref{fig3}~(c)), do not trade on the day-ahead market while for the pessimistic solution $u^\mathrm{d}=\ubar{u}$, all of these energy consumers  trade on the day-ahead market which leads to higher redispatch revenues (DSO costs).
We can notice from Fig.~\ref{fig4} that at times $t=0, 1, 2, 3,4, 5, 23$ the congestion problem is aggravated when the energy consumers can anticipate the redispatch market. Indeed, more energy consumers trades on the day-ahead market at the times when they anticipate the congestion problem may occur to increase their revenues 
in the redispatch market. Therefore, the increase-decrease game aggravates the congestion problem and the DSO should pay high costs for redispatching the schedules of the energy consumers in the redispatch market.

\section{Conclusions and Future Work}\label{sec:con}
In this paper,  we have considered a two-stage problem consisting of the day-ahead market (first stage) and redispatch market (second stage). Then,
we have modeled the increase-decrease game for large populations of energy consumers using a stochastic mean field game 
approach. Next, we have shown that all the agent strategies are ordered along their utilities. Finally, we have proved that a unique Nash equilibrium exists for this game. 
Future works include addressing the increase-decrease gaming using an iterative algorithm based on the mean field reverse Stackelberg game.

\end{document}